\DeclarePairedDelimiterX\Set[2]{\lbrace}{\rbrace}%
 { #1 \,\delimsize|\, #2 }
\renewcommand{\d}[1]{\ensuremath{\operatorname{d}\!{#1}}}
\newtheorem{fact}{Fact}
\crefname{fact}{Fact}{Facts}
\begin{document}
%\title{Association Rule Disambiguation via Maximum Entropy} %\titlenote{This work is supported by the Danish National Research Foundation under the Sapere Aude program}}
\title{Association Rule Mining using Maximum Entropy\thanks{This work is supported by the Danish National Research Foundation under the Sapere Aude program}}

\author{Rasmus Pagh \and Morten Stöckel}

\institute{IT University of Copenhagen \\ \email{\{pagh,mstc\}@itu.dk}}

%\author{Submitted for Blind Review}

\maketitle

\begin{abstract}
Recommendations based on behavioral data may be faced with ambiguous statistical evidence.
We consider the case of association rules, relevant e.g.~for query and product recommendations.
For example: Suppose that a customer belongs to categories A and B, each of which is known to have positive correlation with buying product C, how do we estimate the probability that she will buy product C?

For rare terms or products there may not be enough data to directly produce such an estimate --- perhaps we never directly observed a connection between A, B, and C.
%This might in particular be true for ``long tail'' products that do not have a large number of sales, yet are important for profitability.
What can we do when there is {\em no support\/} for estimating the probability by simply computing the observed frequency?
In particular, what is the right thing to do when A and B give rise to very different estimates of the probability of C?

We consider the use of {\em maximum entropy\/} probability estimates,
%While maximum entropy estimates have previously been used to identify ```significant itemsets'', whose frequency {\em deviates\/} from the estimate,
%We exploit that the vast majority of frequencies observed in real data corresponds to maximum entropy estimates, and even
which give a principled way of extrapolating probabilities of events that do not even occur in the data set!
Focusing on the basic case of three variables, our main technical contributions are that (under mild assumptions):
1) There exists a simple, explicit formula that gives a good approximation of maximum entropy estimates, and
2) Maximum entropy estimates based on a small number of samples are provably tightly concentrated around the true maximum entropy frequency that arises if we let the number of samples go to infinity.

Our empirical work demonstrates the surprising precision of maximum entropy estimates, across a range of real-life transaction data sets. In particular we observe the average absolute error on maximum entropy estimates is a factor $3$--$14$ less compared to using independence or extrapolation estimates, when the data used to make the estimates has low support. We believe that the same principle can be used to synthesize probability estimates in many settings.
\end{abstract}

\newpage

\section{Introduction}\label{sec:int}

%Information retrieval algorithms
Recommender systems that try to assess probabilities, e.g.~for estimating probabilities based on the context of a particular user, may be faced with ambiguous statistical evidence.
For example, consider the task:
%consider the task: {\em Given terms A and B typed by a user, what is the probability that the user wants to complete the query with term C?\/} Suppose that A is known to increase the probability that the user wants to complete with C by 100\% while B is known to decrease the probability by 20\%, how do we take both of these facts into account when assessing the probability that C is the right choice?
\emph{Customer is known to belong to categories A and B, each of which is known to increase the probability of buying product C by $50\%$, how do we estimate the probability that she will buy product C? Is it increased by $50\%$, $100\%$, or perhaps $125\%$?}

Of course we {\em may\/} have enough data on $S_1$, $S_2$, and $S_3$ to make this assessment by computing the observed probability.
But for rare terms or products there may not be enough data to directly produce such an estimate --- perhaps we never directly observed a connection between $S_1$, $S_2$, and $S_3$.
In the extreme case, what can we do when there is {\em no support\/} for estimating the probability by simply computing the observed frequency?
Most likely, even the number of observations of proper subsets of $S_1$, $S_2$, and $S_3$ will then be small enough that there is nonnegligible uncertainty about the pairwise correlations.

The difficulty of estimating probabilities of events occurring clearly depends on the distribution of the input, and on how much information we have about this distribution.
So rather than a classical approach that considers worst-case data, we should
consider ideas from statistical analysis.
%, in which it is of interest to see how certain objective functions make predictions given the (domain specific) data.
%Such considerations have recently received attention in algorithmic applications, e.g. \cite{2010arXiv1010.1609M}.
% RP: Commented out from short version
%
The Maximum Entropy (maxent) Model is a method of statistical inference that based on partial knowledge of a distribution provides a \emph{maximum entropy estimate}.
Informally, it provides a probability prediction based on the distribution that has ``the least bias possible'' based on the given observations.
In this paper we consider the use of maximum entropy estimates in information retrieval contexts where estimates are sought of the probability that an item or term is of interest to a user.
%The maxent model has been applied in the field of frequent item mining, e.g.~\cite{Mampaey:2011:TMI:2020408.2020499,10.1109/ICDM.2007.43},
%but in general requires the computation of the maxent distribution followed by a query of that distribution to get a maxent estimate.
%This is computationally intensive when the entire maximum entropy distribution is needed, since for $n$ variables it has size $2^n$.

\subsection{Motivating examples}

{\bf Movie recommendation.}
Suppose you know that a user loved ``The Rock'' and ``The Matrix''. What is the probability that he already saw and gave 5 stars to ``One Flew Over the Cuckoo's Nest''? We will try to answer this question using the smallest possible sample of the MovieLens data set, which is examined further in \Cref{sec:exp}. The difficulty is that only about 1 in 1000 people has seen all three movies and given them 5 stars. This means that to get a statistically significant answer (in absence of other information) we need to ask a very large set of people. Obviously, the less mainstream films you consider, the bigger this problem will become. See \Cref{fig:intro:ex2} for visualization of the setting.

However, it is considerably easier to obtain information on pairs of movies. About $2.5\%$ of people will have seen at least two of these movies and given them $5$ stars. This means that we can reliably estimate conditional probabilities based on significantly less data. Using the information that your user loved ``The Rock" gives a probability of about $11\%$ that he loves ``One Flew Over the Cuckoo's Nest". However, if we instead use the information that he loved ``The Matrix'' we get an estimate of about $7\%$. It seems that both these movies make it more likely that he will love ``One Flew Over the Cuckoo's Nest'', but how do we combine these pieces of information? It seems that anywhere in the range $11$-$18\%$ is a reasonable guess.

To resolve this ambiguity we again use a maximum entropy estimate based on subset frequencies. This estimate takes the correlation of ``The Rock'' and ``The Matrix'' into account, and arrives at an estimate of $14\%$ based on a data set of $673$ people in which nobody has given all three movies $5$ stars. When we consider a data set $100$ times larger it is possible to see how well this estimate fares: In the larger data set, $15\%$ of those who gave 5 stars to ``The Rock'' and ``The Matrix'' also gave 5 stars to ``One Flew Over the Cuckoo's Nest''. We generally find that maximum entropy estimates are surprisingly accurate across a wide range of data sets from different areas. 
\begin{figure}

        \centering
        \begin{subfigure}[b]{0.45\textwidth}
                \centering
                \includegraphics[width=\textwidth]{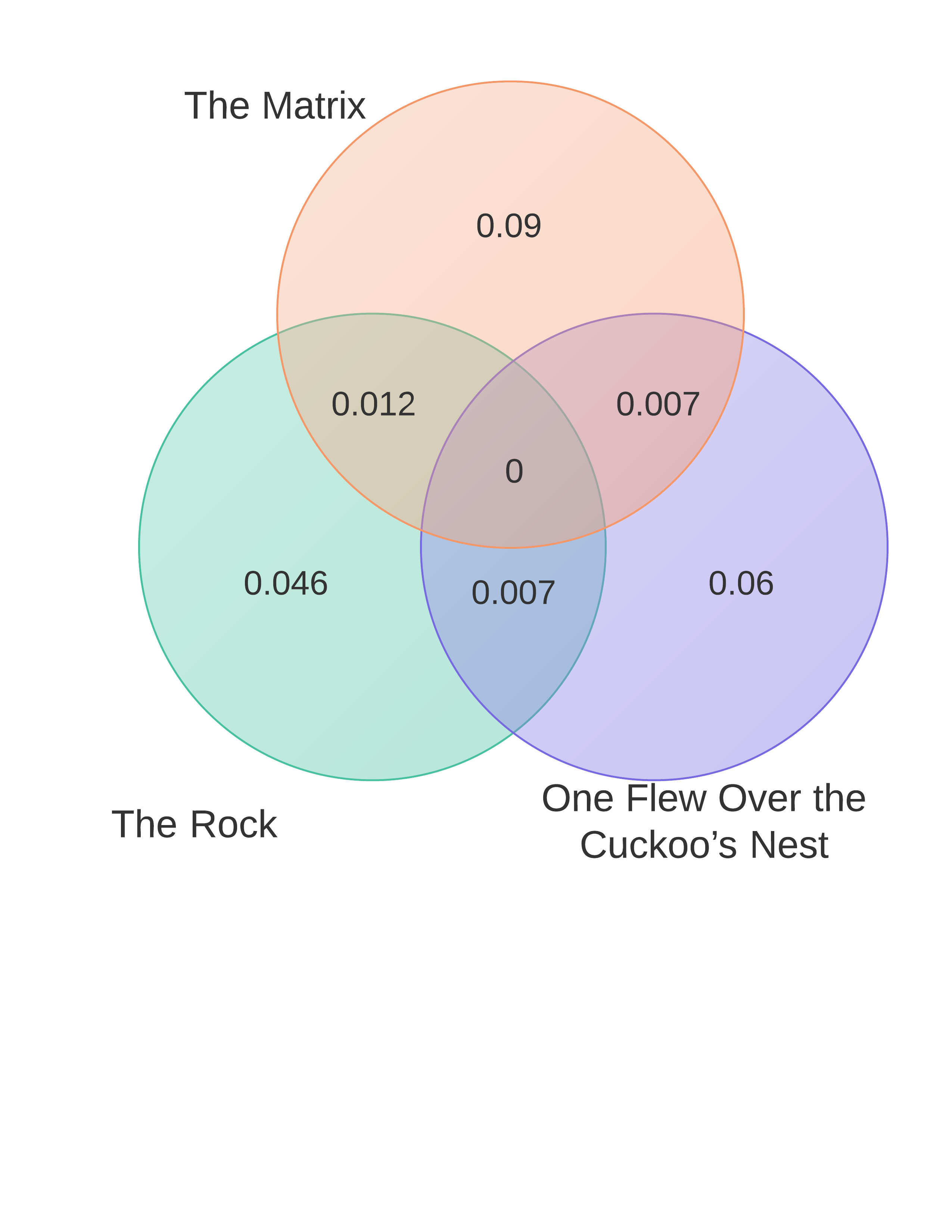}
                \caption{Venn diagram of distribution based on the sample.}
                \label{fig:intro:sample}
        \end{subfigure}%
        ~ %add desired spacing between images, e. g. ~, \quad, \qquad etc.
          %(or a blank line to force the subfigure onto a new line)
        \begin{subfigure}[b]{0.45\textwidth}
                \centering
                \includegraphics[width=\textwidth]{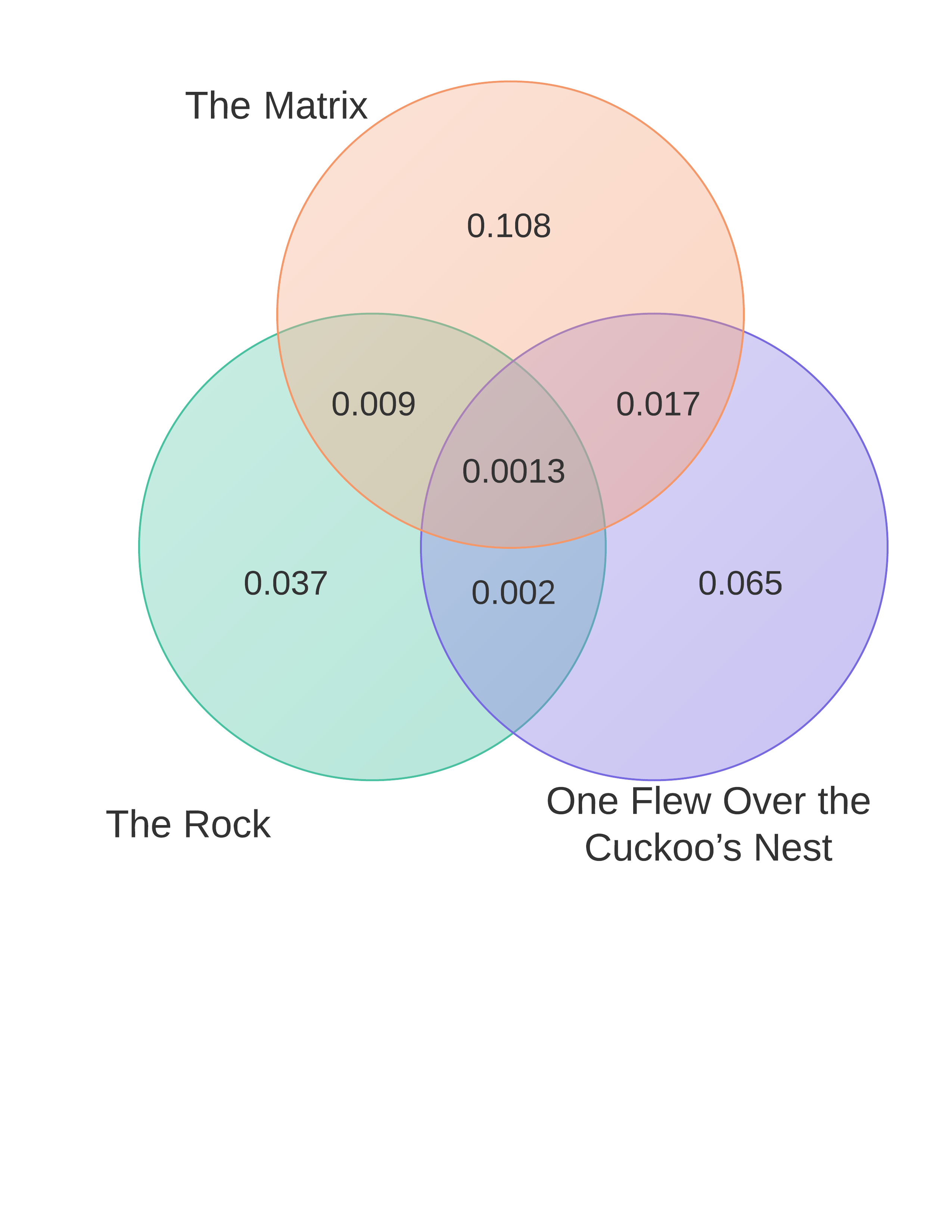}
                \caption{Venn diagram of distribution for the whole data set.}
                \label{fig:intro:full}
        \end{subfigure}
        %%\vspace{-1cm}
        \caption{Two distributions of movie watchers who love three selected movies. \Cref{fig:intro:sample} shows the observed probabilities in a sample of 1\% of the data set, from which we want to approximate $S_1 \cap S_2 \cap S_3$. This is consistent with $S_1$, $S_2$, and $S_3$ never occurring together. \Cref{fig:intro:full} shows the probabilities in the whole data set, and indeed it is not the case that loving two of the movies precludes loving the third one. Our findings are that a maximum entropy estimate in such a case is well-concentrated as opposed to independence or extrapolation estimators. In this example an independence assumption yields an estimate of $36$ occurrences, our maxent estimate yields $82$, while the true number of occurrences is $90$.}
        \label{fig:intro:ex2}
\end{figure}

%Consider three well-known movies \emph{The Rock}, \emph{One Flew Over the Cuckoo's Nest} and \emph{The Matrix}. Our goal is to compute the number out of  $n$ people that like all three movies. However, we only have time to ask $k \ll n$ (independent) people about these three movies. The main problem we consider: not one out of the $k$ polled people liked all three movies, hence extrapolation cannot be used. The poll data to be used is shown in FIGFIXME and the data for the $n$ people (which we do not have access to) is shown in FIGFIXME.
%The answer must be based on the fraction of $k$ people who liked the individual movies (\emph{singleton frequencies}) and the fraction of people who liked the pairs of movies (\emph{pair frequencies}). Assuming the movies are liked independently, one can easily compute and estimate that uses this assumption by multiplying the three singleton frequencies together. Data such as movie preferences is however typically highly correlated, hence such an independence estimate will not be accurate. However, by using the maximum entropy estimate of the number of people who like movies, we get an estimate not far off the real (observed) number of people who like the three movies.

{\bf Query completion.}
Consider the case where a search engine user types the words ``{\tt jordan} {\tt air}'' (followed by a space).
What words should be suggested to complete the query?

We consider the simple method of ignoring the order of words and relying on association rules, in our case obtained from a set of 2.1 million queries from a major US search engine.
Suppose we have two competing suggestions, ``{\tt force}'' and ``{\tt wholesale}'' (occurring in around 0.03\% and 0.06\% of queries, respectively).
Around 9\% of the queries that contain the word ``{\tt air}'' also contain the word ``{\tt force}''.
On the other hand, less than $0.00003\%$ of past queries contain ``{\tt jordan}'' {\em and\/} ``{\tt force}'', which means that the maximum entropy estimate for the probability of completing with ``{\tt force}'' becomes less than 1\%.
For comparison, both ``{\tt air}'' and ``{\tt jordan}'' significantly increase the probability that the word ``{\tt wholesale}'' occurs, to around 0.3\% and 1\%, respectively.
\Cref{fig:intro:ex1} summarizes the association rules involving two words.

If the combination of words had been just slightly more rare, we might have had no past queries containing them.
Thus, for ``long tail'' queries we need to rely on other methods for estimating the likelihood of a particular completion.

A maximum entropy estimate, or more precisely the approximation formula of~(\ref{eq:simplest}), shows that the user has a $5\%$ probability of completing with ``{\tt wholesale}''.
This is consistent with the data, which contains 31 queries including $\{${\tt jordan}, {\tt air}, {\tt wholesale}$\}$ out of a total of 575 queries containing ``{\tt jordan}'' and ``{\tt air}''.
\begin{figure}
        \centering
        \begin{tabular}{|c|l|}
			\hline
			{\bf Assoc.~rule} & {\bf Confidence}\\
			\hline
			{\tt air => force} & 0.091\\
			{\tt jordan => wholesale} & 0.010\\
			{\tt air => wholesale} & 0.0030\\
			{\tt jordan => force } & 0.00097\\
			\hline
        \end{tabular}
        \caption{ Association rules for the words {\tt force} and {\tt wholesale} in the  query set  {\tt jordan} and {\tt air}, respectively. The probabilities are sorted in decreasing order.
		In the whole data set, {\tt wholesale} occurs in a fraction 0.00064 of queries, and {\tt force} in a fraction 0.00028.
		By the first association rule, presence of the word {\tt air} increases the probability of the word {\tt force} hundreds of times, to over 9\%. However, a maximum entropy estimate correctly predicts less than 1\% probability of seeing {\tt force} when both {\tt air} and {\tt jordan} are present.
		On the other hand, the presence of {\tt air} and {\tt jordan} yield a maximum entropy probability for {\tt wholesale} of 3\%, close to the observed frequency of 5\%. }
        \label{fig:intro:ex1}
\end{figure}

%OLD:= Consider data from an online store that sells cake ingredients.
%We are given $10,000$ shopping baskets, which is a small sample of the shopping baskets from purchases over the past week where three particular ingredients $I_1, I_2, I_3$ are of interest.
%We allow ourself to pass over the data and observe the probabilities $\Pr[I_1] = \Pr[I_2] = \Pr[I_3] = 0.1$
%as well as the conditional probabilities $\Pr[I_1|I_2] = \Pr[I_3|I_1] = \Pr[I_2|I_3] = 0.05$ and we are interested in the probability $\Pr[I_1 \wedge I_2 \wedge I_3]$ of all three ingredients being bought. However, in our shopping basket sample the event $I_1 \wedge I_2 \wedge I_3$ does not occur meaning that extrapolation is not meaningful.
%By the independence model we have $\Pr[I_1 I_2 I_3] = 1 / 10^3$ and thus in $10,000$ observations it is unlikely
%to see a statistically significant number of occurrences of all three ingredients together. Furthermore the three ingredients may be negatively correlated, i.e. they may all be quite popular pair-wise but do not
%go well together in any single recipe, hence shopping baskets containing all three is very uncommon.

We will argue by experimental evidence that in scenarios such as the one above, the maximum entropy estimate will
give a better prediction than both extrapolation and the independence model (see \Cref{sec:ests} for definitions), while still being efficiently computable.

\subsection{Our results}

{\bf Problem definition.} We consider the problem of estimating probabilities of conjunctions of boolean random variables, where each such conjunction occurs a statistically insignificant number of times in a data set of samples from the joint distribution (e.g., given by a complete data set). For some big data set $\mathcal{D}$ we consider a sample $D \subset \mathcal{D}$. Given such $D$ we wish to estimate event frequencies of $\mathcal{D}$ also in the difficult cases where the events do not occur in $D$.
In particular we will focus on triples: Let $I$ be the set of possible items, $|I|=n$, and $D$ be an $m \times n$ binary data set where each of the $m$ rows $D_i$ encodes a transaction $D_i \subseteq I$. For all singleton- and pair-subsets of $I$ we assume that we know the number of transactions which they occur in, i.e., all singleton and pair frequencies are known. For each $X \subset I$, $|X|=3$ where the frequency $\theta_X$ in $D$ is~$0$ we then wish to estimate $\theta_X$ in $\mathcal{D}$.

{\bf Our contribution.} We consider triple frequency estimation based on the principle of maximum entropy. %Specifically, given empirical singleton and pair frequencies in an item set $X$, $|X| = 3$, we efficiently compute a maximum entropy frequency estimate $\theta^m_X$ of $X$ and show
Our main theoretical result is that a maximum entropy estimate based on a sample, which implies that the frequencies used as input to the estimator will have some relative error $\varepsilon$, will yield an estimate close to the true triple frequency under the maximum entropy assumption. We show this through a surprisingly simple estimator $\tilde{p}$ that approximates the maximum entropy estimate well when triple frequencies are small.
\begin{theorem}\label{thm:maxent:main}
Consider boolean random variables $X,Y,W$ where $\Pr(X)$, $\Pr(Y)$, $\Pr(W)$, $\Pr(XY)$, $\Pr(XW)$ and $\Pr(YW)$ are given.
%Let $d = (\Pr[XYZ], \Pr[XY\overline{Z}], \ldots, \Pr[\overline{XYZ}])$ be
Assume that the maximum entropy distribution consistent with these probabilities satisfies
\begin{align*}
\Pr(XYW)\leq \varepsilon \min(& \Pr(\overline{X}YW),\Pr(X\overline{Y}W),\Pr(XY\overline{W}),\Pr(\overline{XY}W),\\
& \Pr(\overline{X}Y\overline{W}),\Pr(X\overline{YW}),
\Pr(\overline{XYW})) \enspace .
\end{align*}
Then given probability estimates $p_{\cdot\cdot\cdot}$ such that
\begin{align*}
& \tfrac{p_{\overline{X}YW}}{\Pr(\overline{X}YW)},
\tfrac{p_{X\overline{Y}W}}{\Pr(X\overline{Y}W)},
\tfrac{p_{XY\overline{W}}}{\Pr(XY\overline{W})},
\tfrac{p_{\overline{XY}W}}{\Pr(\overline{XY}W)},
\tfrac{p_{\overline{X}Y\overline{W}}}{\Pr(\overline{X}Y\overline{W})},
\tfrac{p_{X\overline{YW}}}{\Pr(X\overline{YW})},
\tfrac{p_{\overline{XYW}}}{\Pr(\overline{XYW})}\\
& \in [(1-\varepsilon),(1+\varepsilon)],
\end{align*}
it holds that
\begin{align*}
\tilde{p} &= \frac{p_{XY\overline{W}} p_{X\overline{Y}W} p_{\overline{X}YW} p_{\overline{XYW}}}{ p_{X\overline{YW}}  p_{\overline{X}Y\overline{W}}  p_{\overline{XY}W} }\\
&\in \left[ (1- \mathcal{O}(\varepsilon))\Pr(XYW), (1+ \mathcal{O}(\varepsilon))\Pr(XYW)  \right]
\end{align*}
\end{theorem}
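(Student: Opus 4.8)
The argument rests on one exact identity. Any maximum entropy distribution whose constraints fix all marginals of order at most two is log-linear with no three-way interaction term --- it lies in the exponential family generated by the functions $1,x,y,w,xy,xw,yw$ on $\{0,1\}^3$ (the standard characterization of a maximum-entropy fit; equivalently, this falls out of the KKT conditions of the entropy maximization). Writing the eight cell probabilities in this form and taking logarithms --- legitimate because, as noted below, all eight cells are strictly positive under the hypotheses --- one checks that the three-factor interaction of the resulting $2\times 2\times 2$ table, i.e.\ the suitably-signed alternating sum of the eight log-cell-probabilities, vanishes, since every Lagrange parameter enters it with total coefficient zero. Exponentiating and rearranging gives
\begin{equation*}
\Pr(XYW)=\frac{\Pr(XY\overline{W})\,\Pr(X\overline{Y}W)\,\Pr(\overline{X}YW)\,\Pr(\overline{XYW})}{\Pr(X\overline{YW})\,\Pr(\overline{X}Y\overline{W})\,\Pr(\overline{XY}W)} .
\end{equation*}
The right-hand side is exactly the expression defining $\tilde{p}$, only with the true maximum-entropy probabilities in place of the estimates $p_{\cdot\cdot\cdot}$; thus $\tilde{p}$ is \emph{exact} on exact inputs, and the theorem asserts that it degrades gracefully under relative error in those inputs.

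For the stability step, divide the defining formula for $\tilde{p}$ by the identity above. The quotient $\tilde{p}/\Pr(XYW)$ is then a product of four ratios of the form $p_A/\Pr(A)$ --- one for each cell $A$ in the numerator of $\tilde{p}$'s defining formula --- divided by the three analogous ratios for the cells in its denominator, which are precisely the seven ratios that the hypothesis constrains to lie in $[1-\varepsilon,1+\varepsilon]$. Hence $\tilde{p}/\Pr(XYW)$ lies between $(1-\varepsilon)^4/(1+\varepsilon)^3$ and $(1+\varepsilon)^4/(1-\varepsilon)^3$. It remains to observe that, for $\varepsilon$ bounded below $1$ (say $\varepsilon\le\tfrac12$), both endpoints equal $1\pm\mathcal{O}(\varepsilon)$: for the upper one, $(1+\varepsilon)^4\le 1+\mathcal{O}(\varepsilon)$ and $(1-\varepsilon)^{-3}\le(1+2\varepsilon)^3\le 1+\mathcal{O}(\varepsilon)$; for the lower one, Bernoulli's inequality gives $(1-\varepsilon)^4(1+\varepsilon)^{-3}\ge(1-4\varepsilon)(1-3\varepsilon)\ge 1-\mathcal{O}(\varepsilon)$. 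The hidden constant is absolute (its tight leading value is $7$), so $\tilde{p}\in[(1-\mathcal{O}(\varepsilon))\Pr(XYW),(1+\mathcal{O}(\varepsilon))\Pr(XYW)]$, as claimed.

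Two remarks complete the picture. First, strict positivity of all eight cells --- needed to take logarithms, and for the denominator of $\tilde{p}$ to be meaningful --- is supplied by the smallness assumption: $\Pr(XYW)$ is positive (otherwise the ratio hypotheses are vacuous) and is at most $\varepsilon$ times each of the other seven cell probabilities, so those are positive too. Second, although the theorem takes the $(1\pm\varepsilon)$ ratio bounds as hypotheses, it is exactly this smallness regime that makes them attainable in practice: both the empirical cell frequencies of a sample drawn from $\mathcal{D}$, and the inclusion--exclusion expressions that rewrite the seven cells using only the observable pair and singleton frequencies (with the unknown $\Pr(XYW)$ correction dropped), approximate the true maximum-entropy cell probabilities to within relative $\varepsilon$ precisely when $\Pr(XYW)$ is this small.

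I expect the only genuinely delicate step to be the first: knowing that maximum entropy under marginal constraints yields exactly the log-linear model spanned by the constraint functions, and that its optimum lies in the interior of the simplex. Everything downstream is bookkeeping --- a single algebraic rearrangement, and the propagation of seven multiplicative $(1\pm\varepsilon)$ errors through a ratio of a product of four terms over a product of three.
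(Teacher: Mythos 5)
Your proof is correct and hinges on the same algebraic fact as the paper's, but deploys it in a genuinely different and more direct way. The paper never invokes the log-linear/exponential-family characterization: it parametrizes the feasible distributions as a line $x+t\cdot v$ (\Cref{lem:marg}), sets the entropy derivative along that line to zero (\Cref{lem:diff}, yielding (\ref{eq:tmax})), and reads the resulting condition as the polynomial equation $t(l_1+t)(l_2+t)(l_3+t)=(u_1-t)(u_2-t)(u_3-t)(u_4-t)$, where $l_i,u_i$ are the cell values at the endpoint of the line where $\Pr(XYW)=0$. It then needs the smallness assumption to argue (\Cref{lem:bounds,lem:tint}) that the solution $t$ is within $1\pm\mathcal{O}(\varepsilon)$ of $\tilde t=u_1u_2u_3u_4/(l_1l_2l_3)$, and finally propagates the input error (\Cref{lem:appr}). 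Your observation is that (\ref{eq:tmax}), exponentiated and evaluated at the maximum entropy point itself rather than at the $t=0$ endpoint, is the \emph{exact} identity $\Pr(XYW)\Pr(X\overline{YW})\Pr(\overline X Y\overline W)\Pr(\overline{XY}W)=\Pr(XY\overline W)\Pr(X\overline Y W)\Pr(\overline X YW)\Pr(\overline{XYW})$, i.e.\ the vanishing three-way log-interaction; since the theorem's hypotheses supply relative-$\varepsilon$ estimates of precisely those seven maxent cell probabilities, only the error-propagation step remains, giving $\tilde p/\Pr(XYW)\in\left[(1-\varepsilon)^4/(1+\varepsilon)^3,\,(1+\varepsilon)^4/(1-\varepsilon)^3\right]$, which is $1\pm\mathcal{O}(\varepsilon)$ for $\varepsilon$ bounded away from $1$. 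This is shorter, and it exposes that for the theorem \emph{as stated} the smallness assumption is only needed to place the optimum in the interior of the simplex; what the paper's longer route buys is a guarantee for the formula evaluated at the $t=0$ endpoint values, which are the quantities actually computable from pair and singleton frequencies when $\Pr(XYW)$ is unobserved. One small repair: your justification that $\Pr(XYW)>0$ ``otherwise the ratio hypotheses are vacuous'' does not work, because $\Pr(XYW)$ appears in none of the seven ratio hypotheses. Argue instead that those hypotheses force the seven complementary cells to be positive, and that the entropy derivative along the feasible line (see (\ref{eq:diffres})) diverges to $+\infty$ as $\Pr(XYW)\to 0^{+}$, so the maximizer cannot lie on that boundary face.
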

It follows from \Cref{thm:maxent:main} that a) using sampled data to perform maximum entropy estimates of probabilities in the bigger data is theoretically well-founded b) there is a simple explicit estimator, $\tilde{p}$, that approximates the maximum entropy estimate in the interesting case where the triple frequency is significantly smaller than the pair frequencies.

It is instructive to consider a less precise, but even simpler estimator for the case where, informally, there is no strong positive correlation among $X$, $Y$, and $W$, and $\Pr(\overline{XYW})$ is close to~1.
Then $p_{XY\overline{W}}/p_{p_{X\overline{YW}}} \approx p_{Y|X}$, the observed probability of $Y$ given $X$, and similarly $p_{X\overline{Y}W}/p_{p_{\overline{XY}W}} \approx p_{X|W}$ and $p_{\overline{X}YW}/p_{p_{\overline{X}Y\overline{W}}} \approx p_{W|Y}$, so we can approximate the triple frequency by:
\begin{align}\label{eq:simplest}
	p^* = p_{Y|X} p_{X|W} p_{W|Y}
\end{align}
\noindent
Applying (\ref{eq:simplest}) to estimate $\Pr(W|XY)$, we get the estimator \[p^{\#} = p_{Y|X} p_{X|W} p_{W|Y} / p_{XY} = p_{W|X} p_{W|Y} / p_W \text{; }\]that is, the factors by which conditioning on $X$ and $Y$ influence the probability of $W$ get multiplied.

{\bf Empirical study.} Our experimental evaluation on real data sets shows that maximum entropy estimates give meaningful, and often quite precise, frequency predictions also in cases where the independence estimate $\theta^i_X$ and the extrapolation estimate $\theta^e_X$ do not. The error of the estimator is well modeled by the assumption that transactions are independently sampled from a distribution having the estimated subset probabilities.

{\bf Overview.}
%In \Cref{sec:estint} we give an intuitive background on why the maximum entropy estimate works well on real data whereafter we discuss related work in \Cref{sec:rel}.
In \Cref{sec:prel} we introduce basic terms and notation followed by a description in \Cref{sec:ent} of how the maximum entropy estimate of an item set is computed. We then prove in \Cref{sec:noise} that the maxent estimate is not too sensitive to error on the input distribution. For the experimental evaluation we first show in \Cref{sec:exgen} that the maximum entropy estimate achieves better concentration in general and then in \Cref{sec:exlow} we discuss results on item sets of low statistically insignificant support.

\subsection{Related work}\label{sec:rel}

The principle of maximum entropy dates far back, but was introduced to information theory in a seminal work by E. T. Jaynes~\cite{1957PhRv..106..620J}. It has since seen applications in a large number of areas.

The maximum entropy distribution of $n$ random variables is known to be computable in time exponential in $n$ using the well-known Iterative Scaling algorithm~\cite{588021}. The running time is due to the fact that for $n$ variables, there are $2^n$ subsets of variables. In the general case, that is with no knowledge of the distribution, Tatti has proved
that querying the model is PP-hard \cite{Tatti:2006:CCQ:1146068.1711152}, which is (believed to be) harder than NP.

{\em Association rule mining\/}~\cite{Han:2005:DMC:1076797,Agrawal:1993:MAR:170036.170072} is a well-known and extensively studied problem, where a rule has the form $X \implies Y$ with $X,Y$ being disjoint subsets of random variables. In transactional data sets association rule mining traditionally relies on finding \emph{frequent itemsets}, i.e., for some set of items $I$ and a set of transactions $D$ over $I$ then one wishes to report back the sets $X \subseteq I$ that are contained in more than $s$ transactions, for a fixed threshold $s$.

%The model proposed in this paper deals with  \emph{significant} itemset mining, where it is of interest to quantify the statistical significance of the frequent itemsets. This problem is well-studied and several models and techniques have been proposed. In \cite{Kirsch:2012:ERA:2220357.2220359} the authors propose a model where the significance of an itemset is based on how much the support of that itemset deviates from what would be expected in a dataset where items appear independently. For a data set the authors identify a minimum threshold $s*$ for which itemsets above this are potentially significant. In addition they perform multi-hypothesis testing for the hypothesis that an itemset is distributed as in an independent dataset and the alternative hypothesis that there are positive correlations between items in an itemset. A similar approach is to base significance on a threshold of $p$-values, e.g. \cite{Bolton:2002:DHR:647915.738998,4781115}, where the $p$-value of an itemset if the probability of that itemset occuring with at least the observed frequency. Additionally, approaches where Bayesian analysis is the base of the significance measure have been proposed, e.g. \cite{Jaroszewicz:2005:FDU:1081870.1081887,553165}.

The maxent distribution has been used as a model to measure how significant an itemset is, in the framework of frequent itemset mining, e.g. \cite{10.1109/ICDM.2007.43,Mampaey:2011:TMI:2020408.2020499}. The general approach is to compute the maximum entropy distribution (via the Iterative Scaling algorithm) and then compute the Kullback-Leibler divergence with the empirical distribution, from which a $p$-value can be found that is used to rank the item sets. Our approach is that we observe some sample of the subsets of the set of interest and then use these subsets to efficiently compute the maxent estimate.

For the case where all frequencies of the strict subsets are known, the maximum entropy model has been used by R. Meo~\cite{Meo:2000:TDV:363951.363956} by comparing the probability estimate under maximum entropy to the empirical probability in order to achieve a measure of ranking an itemset. One of the main open problems of \cite{Meo:2000:TDV:363951.363956} was determining the existence of a closed form formula for a maxent estimate given the subset probabilities. This was partially resolved in \cite{971763}, where the author provides a formula for the $1$-dimensional search space in which the maxent estimate lies, which is then traversed by binary search that is shown to converge to the maxent estimate.

{\bf Comparisons.} Our estimator takes as parameters all the single and pair-frequencies. The hypothesis present implicitly in our model is that data generally has weak third-order dependencies. This is also the reason Chow-Liu trees\cite{1054142}, which model only first and second-order dependencies, are known to be a good approximations of many observed distributions. The maximum entropy estimate used can be seen as an application of \cite{Meo:2000:TDV:363951.363956}, where singleton and pair frequencies are used to efficiently compute a maxent estimate in the interesting case where the estimand has no observed occurrences. One of the main open problems of \cite{Meo:2000:TDV:363951.363956} is an explicit formula for the maxent estimate - \Cref{thm:maxent:main} in this paper shows an explicit formula for an approximation of the maxent estimate under certain conditions.
As the $1D$ search of the maximum entropy estimate $\theta^m_X$ is determined when having all subset frequencies of $X$, we can compute a good maximum entropy estimate using a small constant number of iterations of binary search as opposed to computing the full maximum entropy distribution. We give a proof of this that is similar to that of Meo~\cite{971763}, with the distinction that where she shows that there exists constants such that the maxent estimate can be classified by setting particular equations to be equal to each other, in our proof the constants are explicitly stated.

\section{Frequency estimates of itemsets}\label{sec:mod}
\subsection{Preliminaries}\label{sec:prel}
We provide some definitions and notation that will be used throughout the paper.

Remember that a boolean random variable $A$ is a variable with values in $\{0,1\}$. A binary data set $D$ of observations of boolean random variables is
an $m \times n$ matrix consisting of $m$ binary $n$-sized vectors, index $0 \leq i \leq n-1$ corresponding to the outcome of boolean random variable $A_i$.
For a particular subset of the boolean variables $S \subseteq [n]$ a binary vector $\omega$ \emph{covers} $S$ iff $A_i = 1$ implies $\omega_i$ for every $i \in S$.
The \emph{frequency} $\theta_S$ of a set of boolean variables is the proportion of the $m$ row vectors in $D$ that covers $S$.\\
A \emph{distribution} $p$ over data $D$ is mapping \[ p: \{0,1\}^n \mapsto [0,1] \text{ s.t.} \sum_{\omega \in \{0,1\}^n} p(\omega) = 1\text{.} \]
For a distribution $p$ and a vector of $1$s $v$ we denote by $p(S = v) = p(S = 1) = \theta_S$ the probability $\Pr(\omega \text{ covers } S)$.

The \emph{empirical distribution} over data set $D$ is given as
\begin{equation}\label{eq:emp}
	q_D(a_1 = v_1, \ldots, a_n = v_n) =  \frac{ \left| \{ t \in D | t = v \}\right|}{m}
\end{equation}
We will denote by \emph{empirical frequency} the frequency according to the empirical distribution

Let a family of random variable sets $\mathcal{F}$ be \emph{satisfied} by a distribution $p: \{0,1\}^n \mapsto [0,1]$
iff for every $S \in \mathcal{F}$ it holds that $p( \omega \textrm{ covers } S ) = \theta_S$.

Finally, we say that a set of binary variables $X$ is \emph{downard closed} if for all strict subsets $S \subset X$ we have $\theta_S$, e.g., if we consider
the triple of items $s = \{I_1, I_2, I_3\}$ then $s$ is downward closed if we know the empirical singleton frequencies $\theta_{I_1}, \theta_{I_2}, \theta_{I_3}$ and empirical pair frequencies $\theta_{\{I_1,I_2\}}, \theta_{\{I_2,I_3\}}, \theta_{\{I_1,I_3\}}$.

We consider specifically the case where all itemsets of size $3$ (triples) are downward closed and the triples are the itemsets which we wish to estimate the frequency of.

%{\bf High level description of the model.} The proposed model is based on the experimentally observed property that a frequency estimate under maximum entropy that uses the empirical frequencies of singleton and pair itemsets is approximately accurate. The approach is is then, for each itemset of interest, to compute the maximum entropy estimate and then compare it to the empirical frequency of that itemset. As a consequence of the accuracy of the maximum entropy estimate, when the deviation between the maximum entropy frequency and the empirical frequency is large, this implies statistical significance of that itemset. As parameter of the model we consider \emph{$k$-skylines}, that is, an itemset is on or above the $k$-skyline iff there are at most $k$ itemsets whose difference between estimated entropy frequency and empirical frequency is greater. For a given $k$, the significant itemsets identified by our model will be the itemsets on or above the $k$-skyline.

\subsection{Estimation by extrapolation and independence assumption}\label{sec:ests}
We briefly describe the two estimators used for comparison. 

Let $X = \{I_1, I_2, I_3\}\subset I$, $|X|=3$, be the triple of interest from sampled data set $D \subset \mathcal{D}$. The independence model assumes the occurrences of random variables to not be correlated, thus we have:
\begin{equation*}
\theta^i_X = \theta_{I_1} \theta_{I_2} \theta_{I_3}
\end{equation*}
For the extrapolation estimator, let $occ(X)$ be the number of occurrences of $X$ in $D$. To estimate the frequency $\theta_X$ in $\mathcal{D}$ we have:
\begin{equation*}
\theta^e_X = occ(X)/|D|
\end{equation*}
Following from Chernoff bounds on independent variables $\theta^e_X$ is known to be a good estimate when $\theta_X$ is significant in $D$.

\subsection{Maximum entropy of itemsets}\label{sec:ent}
We are interested in estimating the frequency of a specific set of items $G \subseteq I$. We will estimate such a frequency using
the maximum entropy distribution, which intuitively can be thought of as the most uniform distribution given some
observed frequencies. We will compute the needed entries of the maximum entropy distribution to be used to compare to the empirical distribution.

For a family of itemsets $\mathcal{F}$ and a variable set $G$ let the projected family $\mathcal{F}_G$ be defined as \[ \mathcal{F}_G = \{ X \in \mathcal{F} \vert X \subset G, X \neq \emptyset\}.\]
Then letting $\mathcal{P}$ be the set of all possible probability distributions that satisfy $\mathcal{F}_G$,
the entropy of a distribution $p$ is given as
\begin{equation}
	H(p) =	- \sum_{ \omega \in \{0,1\}^{|G|}} p(\omega) \log p(\omega)
\end{equation}
The maximum entropy distribution $p*$ can be found by maximizing $H(p)$ over all $p \in \mathcal{P}$
\begin{equation}
	p^* = \arg\max_{p \in \mathcal{P}} H(p)
\end{equation}
We note that $|\mathcal{F}_G| = \mathcal{O}\left( 2^{|G|} \right)$ and if $\mathcal{F}_G = \emptyset$ then
 $p^*$ is the uniform distribution. The set $\mathcal{P}$ contains the empirical distribution (\Cref{eq:emp}) and hence is
non-empty by construction. We shall denote by \emph{maximum entropy estimate}  $\theta^m_X$ of an itemset $X$ the frequency of the itemset according to the maximum entropy distribution.

\subsubsection{Classifying and computing the maxent estimate}\label{sec:maxentcomp} For completeness we will state the approach used to compute the maximum entropy estimate of itemset frequency. We note that a similar proof of how to find the maxent estimate appears in \cite{971763}, however we give explicit constants in \Cref{eq:tmax} whereas their proof shows existence of the constants.

{\bf High-level description}. The overview of the proof is that for any $z>1$ variables, when $\Set{\theta_X}{X \subset G}$ is given for each variable then
the subspace of the joint probability space of $z$ random boolean variables is $1$-dimensional. It follows from this that the frequency estimate according to the maximum entropy distribution $p^*$ is located on a line segment. The estimate thus be computed by doing a simple binary search on this line segment and the main point of doing this is that we avoid computing the entire maximum entropy distribution.

Given $z$ random boolean variables and the marginal probabilities $\Set{\theta_X}{X \subset G}$ then the joint probability space is given by $2^z$ linear equations, where
the rank of the corresponding matrix is $2^z-1$. Let the $z$ boolean random variables $A_1, \ldots, A_z$ have probabilities $\Pr[A_1], \ldots, \Pr[A_z]$.
For $x \subseteq [z]$, $y = [z]\setminus x$ then denote by $f^=_{x}$ the probability $\Pr\left[\forall_{i \in x} A_i = 1 \wedge \forall_{j \in y} A_j = 0\right]$.
and by $\theta_x$ the probability $\Pr\left[\forall_{i \in x} A_i = 1\right]$.

\begin{lemma}\label{lem:marg}For random $z$ boolean variables $A_1, \ldots, A_z$ with feasible marginal probabilities $\Set{\theta_X}{X \subset [z]}$ then the space of feasible distributions is determined entirely by $f^{=}_{[z]}$. More generally,
for $S \subseteq [z]$ then $f^{=}_S$ follows from \Cref{eq:pincl}.
\begin{equation}\label{eq:pincl}
f^{=}_S = \sum_{S' \supseteq S} (-1)^{\left|S' \setminus S\right|} \theta_{S'}
\end{equation}
\end{lemma}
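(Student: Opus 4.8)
The plan is to recognize \Cref{eq:pincl} as Möbius inversion over the Boolean lattice $2^{[z]}$ --- i.e.\ an instance of inclusion--exclusion --- and then read off the $1$-dimensionality as a corollary.

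First I would record the ``forward'' relation. For $x\subseteq[z]$, $f^{=}_{x}$ is the probability that exactly the coordinates in $x$ are set to $1$; these $2^z$ events partition the sample space, and the event ``$\forall i\in X:\,A_i=1$'' is the disjoint union of the atoms indexed by all $x\supseteq X$. Hence $\theta_X=\sum_{x\supseteq X} f^{=}_{x}$ for every $X\subseteq[z]$. In particular $\theta_\emptyset=\sum_{x} f^{=}_{x}=1$ is precisely the normalization constraint, and $\theta_{[z]}=f^{=}_{[z]}$. This exhibits $(\theta_X)_X$ as the (upward) zeta transform of $(f^{=}_{x})_x$ on $2^{[z]}$.

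Next I would invert it. Substituting the right-hand side of \Cref{eq:pincl} and then expanding each $\theta_{S'}=\sum_{x\supseteq S'} f^{=}_{x}$, the coefficient of a given $f^{=}_{x}$ is $\sum_{S\subseteq S'\subseteq x}(-1)^{|S'\setminus S|}$. By the binomial identity $\sum_{k=0}^{t}\binom{t}{k}(-1)^k=[\,t=0\,]$ this equals $1$ when $x=S$ and $0$ otherwise (the sum being empty, hence $0$, whenever $S\not\subseteq x$). Thus the right-hand side of \Cref{eq:pincl} equals $f^{=}_{S}$, so the formula holds for every $S\subseteq[z]$ and every feasible distribution; alternatively one may simply invoke standard Möbius inversion on $2^{[z]}$, whose Möbius function is $\mu(S,S')=(-1)^{|S'\setminus S|}$.

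Finally I would extract the structural statement. Since only the $\theta_X$ with $X\subsetneq[z]$ are prescribed, \Cref{eq:pincl} writes each $f^{=}_{S}$ as a fixed affine function of these known values plus a multiple of the single remaining unknown $\theta_{[z]}=f^{=}_{[z]}$; conversely, for any value $t$ of $f^{=}_{[z]}$ the formulas produce a vector $(f^{=}_{x})_x$ whose zeta transform reproduces all the prescribed marginals, so it is a bona fide feasible distribution exactly when all its entries are nonnegative --- an interval of admissible values of $t$. Hence the feasible distributions form a one-parameter family parametrized by $f^{=}_{[z]}$, equivalently the $2^z\times 2^z$ constraint system has rank $2^z-1$. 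I do not anticipate a genuine obstacle: the only points needing care are matching the ``covers'' convention so that the relevant transform is the upward zeta transform, and being explicit that $\theta_\emptyset=1$ functions as the normalization constraint rather than as an independent input.
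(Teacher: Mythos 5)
Your proof is correct, and it takes a somewhat different route from the paper's. The paper proves \Cref{eq:pincl} by a downward induction on $|S|$: it starts from the recursive identity $f^{=}_S = \theta_S - \sum_{S' \supset S} f^{=}_{S'}$ (\Cref{eq:base}), substitutes the inductive hypothesis into the sum over strict supersets, and then collapses the resulting double sum using the cancellation statement in \cref{fa:sum}. You instead do a one-shot M\"obius inversion: you first record the forward (zeta-transform) relation $\theta_X = \sum_{x \supseteq X} f^{=}_x$, then substitute it into the right-hand side of \Cref{eq:pincl} and compute the coefficient of each atom $f^{=}_x$ as $\sum_{S \subseteq S' \subseteq x} (-1)^{|S'\setminus S|} = [\,x = S\,]$ via the binomial identity. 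Both arguments ultimately rest on the same alternating-sum cancellation, but yours avoids the induction and the bookkeeping of splitting double sums (which is where the paper's write-up is at its murkiest), and it is the more standard packaging of inclusion--exclusion on the Boolean lattice. You also gain two things the paper's proof leaves implicit: you explicitly identify $\theta_\emptyset = 1$ as the normalization constraint, and you actually derive the first assertion of the lemma --- that the feasible distributions form a one-parameter family determined by $f^{=}_{[z]}$, with feasibility cutting out an interval of admissible values --- whereas the paper's proof establishes only the formula and defers the one-dimensionality to the surrounding discussion of \Cref{eq:distspace}.
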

\begin{proof}
We prove this by induction on $|S|$. For $|S|=i$ we have \[ f^=_S = \sum_{S' \supseteq S} (-1)^{\left| S' \setminus S \right|} \theta_{S'} \]
For the inductive step we assume \Cref{eq:pincl} to hold for $|S| > i$.
Then for $|S| = i$, \Cref{eq:base} holds as the sum is over supersets of $S$.
By applying \Cref{eq:pincl} to the second term of the right hand side of \Cref{eq:base} we get
\Cref{eq:sub}.
\begin{align}
f^{=}_S &= \theta_S - \sum_{S' \supset S} f^=_{S'}\label{eq:base}\\
f^{=}_S &= \theta_S - \sum_{S' \supset S} \sum_{S'' \supseteq S'} (-1)^{\left|S'' \setminus S'\right|} \theta_{S''}\label{eq:sub}
\end{align}
The double sum of \Cref{eq:sub} can be split into two as shown in \Cref{eq:maxent:split}
\begin{equation}
f^{=}_S = \theta_S - \left( \sum_{S' \supseteq S} \sum_{S'' \supseteq S'} (-1)^{\left|S'' \setminus S'\right|} \theta_{S''} - \sum_{S' = S} \sum_{S'' \supseteq S'} (-1)^{\left|S'' \setminus S'\right|} \theta_{S''} \right)   \label{eq:maxent:split}
\end{equation}
The first double sum can be split into two parts \[ \sum_{S' \supset S} \sum_{S'' \supset S'} (-1)^{\left|S'' \setminus S'\right|} \theta_{S''} + \sum_{S' = S} \sum_{S'' = S'} (-1)^{\left|S'' \setminus S'\right|} \theta_{S''}\]
the first for which we will use \Cref{fa:sum} below.
\begin{fact}\label[fact]{fa:sum} For a set space $X$ and function $g: X \to \mathbb R$, for a double sum of sign-alternating supersets of any $x \in X$ we have \[ \sum_{x' \supset x} \sum_{x'' \supset x'} (-1)^{\left|x'' \setminus x'\right|} g(x) = 0 \] due to summands canceling out.
\end{fact}
Then by application of \Cref{fa:sum} to \Cref{eq:maxent:split} we get \Cref{eq:merge}, where the first double sum consists
only of the element $\theta_S$, hence we now arrive at the induction basis in \Cref{eq:basis}.
\begin{align}
f^{=}_S &= \theta_S - \left( \sum_{S' = S} \sum_{S'' = S'} (-1)^{\left|S'' \setminus S'\right|} \theta_{S''} - \sum_{S' = S} \sum_{S'' \supseteq S'} (-1)^{\left|S'' \setminus S'\right|} \theta_{S''} \right)\label{eq:merge}\\
f^{=}_S &= \theta_S - \left( \theta_S - \sum_{S' = S} \sum_{S'' \supseteq S'} (-1)^{\left|S'' \setminus S'\right|} \theta_{S''} \right) \notag\\
f^{=}_S &=  \sum_{S'' \supseteq S} (-1)^{\left|S'' \setminus S\right|} \theta_{S''} \label{eq:basis}
\end{align}
\end{proof}
Intuitively, \Cref{lem:marg} states that the space of probability distributions that satisfy
the marginal probabilities can be traversed by varying $f^{=}_{[z]}$. We note that \Cref{lem:marg} was used earlier by Calders \& Goethals~\cite[eq. (1)]{Calders:2007:NIM:1231311.1231319} and that we include the proof for sake of completeness.

We shall now argue that on this line through $2^z$ dimensional space, there is a unique point, i.e. a unique feasible distribution $p$,
that maximizes the entropy $H(p)$ and thus computing this point allows us to query the the maximum entropy distribution $p^*$ for a
$z$-sized set of variables.
Given a feasible distribution $x$ consisting of entries $x_S \geq 0$ for every $S \subseteq [z]$,
then by \Cref{lem:marg} the feasible distribution space $\mathcal{P}_f$ can be traversed by \Cref{eq:distspace}.
\begin{equation}\label{eq:distspace}
\mathcal{P}_f = x + t \cdot v, t \in (l,r)
\end{equation}
where $v$ is a $2^z$-sized $\{-1,+1\}$-vector with entries $v_S = (-1)^{z -|S|}$ for every $S \subseteq [z]$ and the range
$(l,r)$ is the range for which all coordinates in the vector $x + t \cdot v$ are non-negative.
Consider the partitioning of subsets $S\subseteq [z]$ into
$S_{\textup{even}} = \Set{S}{\left( z - |S| \right) \text{is even}}$  and $S_{\textup{odd}} = \Set{S}{\left( z - |S| \right) \text{is odd}}$.
Then the $t$-value borders are given below.
\begin{align*}
l &= \max \Set{x_S}{S \in S_{\textup{even}} }\\
r &= \min \Set{x_S}{S \in S_{\textup{odd}}}
\end{align*}
We note that feasible solution $x$ exists by construction since we have observed a feasible distribution and that vector $v$ corresponds to
the null space of the $2^z$-row matrix denoting the linear equalities which the joint distribution adheres to.

Location of the unique point corresponding to querying the max entropy distribution $p^*$ is shown in \Cref{lem:diff} below.
\begin{lemma}\label{lem:diff}
For a feasible distribution space $\mathcal{P}_f = x + t \cdot v, t \in (l,r)$ there exists a point $p_{max} \in \mathcal{P}_f$ s.t.
$ \forall p \in\mathcal{P}_f, p \neq p_{max} \implies H(p_{max}) \geq H(p)$.
In particular, \Cref{eq:diffres} below holds.
\begin{align}
\frac{\d{}}{\d{t}} H(x + t \cdot v) &= \frac{\d{}}{\d{t}} \left( \sum_{S \subseteq [z]} ( x_S + (-1)^{z - |S|}t) \log \left( \frac{1}{x_S + (-1)^{z-|S|} t}\right) \right) \label{eq:diffbase}\\
&= \sum_{S \subseteq [z]} (-1)^{z - |S|} \log \left(\frac{1}{x_S + (-1)^{z-|S|} t} \right)\label{eq:diffres}
\end{align}
\end{lemma}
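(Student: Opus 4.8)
The plan is to separate the statement into two parts — the differentiation identity \Cref{eq:diffres} and the existence/uniqueness of the maximizer $p_{max}$ — and to prove the identity first, since its consequences drive the rest. For \Cref{eq:diffres} I would write $p_S(t) = x_S + (-1)^{z-|S|} t$, so that $p_S'(t) = (-1)^{z-|S|}$, and differentiate the $S$-th summand of \Cref{eq:diffbase}, namely $-p_S(t)\log p_S(t)$, by the product rule: its derivative is $(-1)^{z-|S|}\log(1/p_S(t)) - (-1)^{z-|S|}$. Summing over all $S\subseteq[z]$, the constant terms contribute $\sum_{S\subseteq[z]}(-1)^{z-|S|} = (1-1)^z = 0$ (using $z\ge 1$), so they cancel and exactly \Cref{eq:diffres} remains. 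This step is purely mechanical.

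For existence and uniqueness I would first differentiate \Cref{eq:diffres} once more, obtaining $\frac{\d{}}{\d{t}}$ of the sum equal to $-\sum_{S\subseteq[z]} 1/p_S(t)$, which is strictly negative throughout $(l,r)$ because every $p_S(t)>0$ there. Hence $t\mapsto H(x + t\cdot v)$ is strictly concave on $(l,r)$ and has at most one maximizer. For existence, I would note that $u\mapsto -u\log u$ extends continuously to $u=0$ with value $0$, so $t\mapsto H(x + t\cdot v)$ extends continuously to the compact interval $[l,r]$ and therefore attains a maximum there; strict concavity forces this maximizer to be unique, and we take $p_{max}$ to be the corresponding distribution. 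The remaining point is that $p_{max}$ corresponds to an interior $t\in(l,r)$ (so that it indeed lies in $\mathcal{P}_f$ as defined in \Cref{eq:distspace}): as $t\to r^-$ at least one coordinate $p_S(t)$ vanishes, and by the construction of $r$ every such coordinate has $(-1)^{z-|S|}=-1$, so the corresponding term $(-1)^{z-|S|}\log(1/p_S(t))$ in \Cref{eq:diffres} tends to $-\infty$ while the finitely many other terms stay bounded; thus the derivative is eventually negative near $r$. Symmetrically, the coordinates vanishing as $t\to l^+$ have $(-1)^{z-|S|}=+1$, so the derivative tends to $+\infty$ there. Consequently $H(x+t\cdot v)$ is increasing just above $l$ and decreasing just below $r$, its maximizer is interior, and it is the unique root of \Cref{eq:diffres}. (If $l=r$ the feasible segment is a single point and the statement is trivial.)

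The step I expect to be the main obstacle is the endpoint analysis: one must verify that the coordinates of $x + t\cdot v$ forced to zero at the lower and upper ends of $[l,r]$ have values $(-1)^{z-|S|}$ of opposite sign, so that the single divergent logarithm in \Cref{eq:diffres} drives the derivative to $+\infty$ at $l$ and to $-\infty$ at $r$; this is what guarantees the maximum is attained strictly inside the open interval rather than on the boundary. The differentiation identity and the strict-concavity computation, by contrast, are routine once \Cref{eq:diffres} is established.
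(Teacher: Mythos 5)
Your proof is correct, and its computational core coincides with the paper's: both differentiate term by term, obtain $(-1)^{z-|S|}\log\bigl(1/(x_S+(-1)^{z-|S|}t)\bigr)$ plus a constant $(-1)^{z-|S|}$ from the product rule, and kill the constants by observing that $[z]$ has equally many even- and odd-parity subsets, i.e.\ $\sum_{S\subseteq[z]}(-1)^{z-|S|}=(1-1)^z=0$. Where you differ is in completeness: the paper's proof stops after this identity and never actually establishes the existence or uniqueness of $p_{max}$ (it is asserted and then handled downstream in \Cref{cor:diff}), whereas you supply the missing pieces --- strict concavity via $\frac{\d{}}{\d{t}}$ of \Cref{eq:diffres} being $-\sum_S 1/(x_S+(-1)^{z-|S|}t)<0$, existence via continuous extension of $-u\log u$ to the compact interval, and interiority via the sign-split blow-up of the derivative at the two endpoints. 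That last endpoint analysis is a genuine addition with a consequence worth noting: it shows that when $l<r$ the critical point $t_{max}$ of \Cref{eq:tmax} always lies strictly inside $(l,r)$, so the boundary cases $t_{max}<l$ and $t_{max}>r$ that \Cref{cor:diff} guards against with the median construction cannot actually occur (except in degenerate situations where some coordinate of $x$ is already zero and the interval collapses). Your argument is therefore not only compatible with the paper's but strictly sharper; the only caveat is to state explicitly, as you implicitly assume, that all coordinates of $x+t\cdot v$ are strictly positive on the open interval so that only odd-parity coordinates vanish at $r$ and only even-parity ones at $l$.
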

\begin{proof}
We wish to show the existence of the scalar $t_{max}$ that optimizes the entropy,
i.e. $p_{max} = x + t_{max} \cdot v$.
\Cref{eq:diffres} follows form standard deriviate rules, from which we arrive at
\[\frac{\d{}}{\d{t}} H(x + t \cdot v) = \sum_{S \subseteq [z]} (-1)^{z - |S|}\left( \log \left(\frac{1}{x_S + (-1)^{z-|S|} t} \right) + \frac{\left( x_S + (-1)^{z-|S|} t \right)^2}{\left( x_S + (-1)^{z-|S|} t \right)^2} \right) \]
where the rightmost term will cancel out due to there being an equal number of odd- and even-sized subsets of $[z]$ for any integer $z$.
\end{proof}
\begin{corollary}\label{cor:diff}
The value $t_f \in (l,r)$ that maximizes $H(x + t \cdot v)$ while $x + t \cdot v$ is a feasible solution can be found as $t_f = \textup{median} \{l,r,t_{max} \}$.
\end{corollary}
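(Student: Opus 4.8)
The plan is to reduce the corollary to the single-variable optimisation of a strictly concave function on an interval. Concretely: I would show that $t\mapsto H(x+t\cdot v)$ is strictly concave on $(l,r)$, conclude that its maximiser over the feasibility interval is the projection onto $[l,r]$ of the unconstrained stationary point $t_{max}$ (the zero of the first‑derivative expression of \Cref{lem:diff}), and finally observe that ``project a point onto $[l,r]$'' is exactly the operation $t_f=\textup{median}\{l,r,t_{max}\}$.

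First I would differentiate the expression from \Cref{lem:diff} once more. Since $\frac{\d{}}{\d{t}}\log\!\left(\frac{1}{x_S+(-1)^{z-|S|}t}\right) = -\,\frac{(-1)^{z-|S|}}{x_S+(-1)^{z-|S|}t}$ and $(-1)^{2(z-|S|)}=1$, differentiating $\frac{\d{}}{\d{t}} H(x+t\cdot v)=\sum_{S\subseteq[z]}(-1)^{z-|S|}\log\!\left(\frac{1}{x_S+(-1)^{z-|S|}t}\right)$ termwise gives $-\sum_{S\subseteq[z]}\frac{1}{x_S+(-1)^{z-|S|}t}$. On the open interval $(l,r)$ every coordinate $x_S+(-1)^{z-|S|}t$ of $x+t\cdot v$ is strictly positive --- this is precisely how $l$ and $r$ were chosen --- so this quantity is strictly negative there, i.e. $H(x+t\cdot v)$ is strictly concave on $(l,r)$; and since $t\log(1/t)\to 0$ as $t\to 0^+$ it is continuous on the closed interval $[l,r]$. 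A maximiser $p_{max}=x+t_f\cdot v$ therefore exists (a continuous function on a compact interval attains its supremum) and is unique (strict concavity), which in passing also settles the existence/uniqueness claim of \Cref{lem:diff}.

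Next I would pin down $t_f$. Strict concavity makes $\frac{\d{}}{\d{t}} H(x+t\cdot v)$ strictly decreasing on $(l,r)$, so it vanishes at most once; call that value $t_{max}$. The usual case analysis then applies: if $t_{max}\in(l,r)$ then $t_f=t_{max}$; if the derivative is non‑positive throughout $(l,r)$ (i.e. $t_{max}\le l$) then $H$ is non‑increasing there and $t_f=l$; symmetrically, if $t_{max}\ge r$ then $t_f=r$. In all three cases $t_f$ is the nearest point of $[l,r]$ to $t_{max}$, i.e. $t_f=\textup{median}\{l,r,t_{max}\}$. One may sharpen this by noting that as $t\downarrow l$ some coordinate indexed by an even‑parity set tends to $0^+$ while its $\log(1/\cdot)$ term carries the sign $(-1)^{z-|S|}=+1$, so $\frac{\d{}}{\d{t}} H\to+\infty$, and symmetrically $\frac{\d{}}{\d{t}} H\to-\infty$ as $t\uparrow r$; hence $t_{max}$ is in fact always interior and the median collapses to $t_{max}$ --- but stating the result as a median keeps it valid in degenerate boundary cases and robust to the perturbed input frequencies treated in \Cref{sec:noise}.

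The only genuinely delicate point is the boundary analysis: one must check that $(l,r)$ is non‑empty (it contains $t=0$, since the observed feasible distribution $x$ has non‑negative coordinates) and that $H$ extends continuously to $[l,r]$ under the convention $0\log 0=0$; and, for the sharper ``$t_{max}$ is interior'' remark, that exactly one parity class of coordinates vanishes at each endpoint so the sign of the derivative blow‑up is as claimed. The remaining ingredients --- the second‑derivative computation and the identity ``clamp to $[l,r]$ equals $\textup{median}\{l,r,\cdot\}$'' --- are routine.
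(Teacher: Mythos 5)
Your proposal is correct and takes essentially the same route as the paper's own proof: locate the stationary point $t_{max}$ of $t \mapsto H(x + t \cdot v)$ and clamp it to $[l,r]$ via the three-case analysis that the median operation encodes. You additionally supply the second-derivative/strict-concavity computation and the boundary blow-up of the derivative, which the paper's proof leaves implicit by simply invoking the uniqueness claim of \Cref{lem:diff}; this is a strengthening of the same argument rather than a different approach.
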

\begin{proof}
From \Cref{lem:diff} we have that $t_{max}$ is the $t$-value that maximizes the entropy function $H(x + t \cdot v)$, which is at its unique maximum when \Cref{eq:tmax} holds.
\begin{equation}\label{eq:tmax}
 \sum_{S \in S_{\textup{odd}}} \log \left(\frac{1}{x_S - t} \right) = \sum_{S \in S_{\textup{even}}} \log \left(\frac{1}{x_S + t} \right)
\end{equation}
Let $t_f = \textup{median} \{l,r,t_{max} \}$. For the line segment spanned by end points $l$ and $r$, we have 3 cases: $t_{max} < l$ then $t_f = l$, $l \leq t_{max} \leq r$ then $t_f = t_{max}$ and $t_{max} > r$ then $t_f = r$.
The median of the values $l,r,t_{max}$ distinguishes these cases.
\end{proof}

The $t$ for which \Cref{eq:tmax} holds is $t_{\text{max}}$, that is, the equation holds strictly under maximum entropy. Since the solution space space of $(l,r)$ is always $1$-dimensional if all frequencies of the $2^z-1$ subsets are given, then we compute the value $t_{\text{max}}$ by binary search in the space. We note that this search is an approximation, but the binary search converges to values arbitrarily close to $t_{\text{max}}$ quickly in practice, e.g., $30$ iterations of binary search was used to produce the results in this paper. For an itemset $X$, the $t$-value we hold after $30$ such iterations in the search space is then the maximum entropy estimate $\theta^m_X$.

We note that by using a constant number of iterations for each binary search the computation takes time linear in the number of itemsets for which we wish to compute the maximum entropy estimate. Up to a constant factor this is equivalent to the extrapolation and independence assumption estimators.

\subsubsection{Maxent on noisy inputs}\label{sec:noise}
Recall that a view on data is that all data comes from some smaller sample $D$ sampled independently from a larger data set $\mathcal{D}$. As we wish to use the data from $D$ to reason about triples from $\mathcal{D}$ we will show that the error introduced by sampling does not hurt the maxent estimate too much, in particular we will show that a maxent estimate based on $D$ will not be far from a maxent estimate based on $\mathcal{D}$.
We will show that the maximum entropy estimate on a triple computed on input with relative error $0 \leq \varepsilon < 0.5$ is only a factor $\mathcal{O}(\varepsilon)$ from the maximum entropy estimate computed using the true distribution from $\mathcal{D}$ as input.

For a distribution $d$ over $3$ variables $X,Y,W$ we have \[ d = (\Pr(XYW), \Pr(XY \wedge \overline{W}), \ldots, \Pr(\overline{XYW}))\text{,}\] i.e., $|d| = 2^3$. Let $|d_i|$ be the number of non-negated literals in $d_i$. The entries of $d$ can (as in \Cref{sec:maxentcomp}) be partitioned into two sets, $u = u_1, \ldots, u_4$ and $ l = l_1, \ldots, l_4$ where $d_i \in u$ if $3-|d_i|$ is odd and $d_i \in l$ otherwise.

Let two polynomials $L_{\varepsilon_1}(t)$ and $U_{\varepsilon_1}(t)$ with error $0 \leq |\varepsilon_1| < 1$ be defined
\begin{align}
L_{\varepsilon_1}(t) & = (l_1 + \varepsilon_1 + t)(l_2 + \varepsilon_1 + t)(l_3 + \varepsilon_1 + t)t \label{eq:upol} \\
U_{\varepsilon_1}(t) & = (u_1 + \varepsilon_1 - t)(u_2 + \varepsilon_1 - t)(u_3 + \varepsilon_1 - t)(u_4 + \varepsilon_1 - t) \label{eq:lpol}
\end{align}
where $t = \Pr(XYW)$ is the triple frequency. When we have
\begin{equation}
L_0(t) = U_0(t)
\end{equation}
then $t$ is the triple frequency under maximum entropy. We seek to bound the error on the output caused by the input error $\varepsilon_1$. Letting $t_0$ be the solution to $L_0(t) = U_0(t)$ and $t_{\varepsilon_1}$ be the solution to $L_{\varepsilon_1}(t) = U_{\varepsilon_1}(t)$ we wish to bound the error on $t_{\varepsilon_1}$ in terms $t_0$.
We will show the following lemma.
\begin{lemma}\label{lem:bounds}
Let two polynomials with additive error $t$ on the terms and where $0 < l_i, u_i, \leq 1$ be defined as below.
\begin{align}
\tilde{L}(t) &= (l_1+t)(l_2+t)(l_3+t)\\
\tilde{U}(t) &= (u_1-t)(u_2-t)(u_3-t)(u_4-t)
\end{align}
For all $t \in \left[0; \varepsilon \min{u_1, \ldots, u_4, l_1, \ldots, l_3}\right]$, where $ 0 < \varepsilon \leq 1$, 
the following bounds on \Cref{eq:upol,eq:lpol} hold 
\begin{align}
\tilde{L}(0)t &\leq L_0(t) \leq (1+\varepsilon)^3 \tilde{L}(0)t \label{eq:uin}\\
(1-\varepsilon)^4 U_0(0) &\leq \tilde{U}(t) \leq U_0(0) \label{eq:lin}
\end{align}
\end{lemma}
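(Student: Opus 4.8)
The plan is to reduce both two-sided estimates to elementary, factor-by-factor monotonicity bounds valid on the interval $t \in [0, \varepsilon\min(u_1,\dots,u_4,l_1,l_2,l_3)]$, using only that on this interval every factor occurring in $\tilde L$, $\tilde U$, $L_0$, and $U_0$ is nonnegative and trapped between a fixed multiple of its constant term and that constant term itself.

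First I would identify the four polynomials explicitly. Setting $\varepsilon_1 = 0$ in \Cref{eq:upol,eq:lpol} gives $L_0(t) = (l_1+t)(l_2+t)(l_3+t)\,t = \tilde L(t)\,t$ and $U_0(t) = (u_1-t)(u_2-t)(u_3-t)(u_4-t) = \tilde U(t)$; in particular $\tilde L(0) = l_1 l_2 l_3$ and $U_0(0) = u_1 u_2 u_3 u_4$. Hence \Cref{eq:uin} is equivalent (trivially for $t = 0$, and for $t > 0$ after cancelling the positive factor $t$) to
\[
  l_1 l_2 l_3 \;\le\; (l_1+t)(l_2+t)(l_3+t) \;\le\; (1+\varepsilon)^3\, l_1 l_2 l_3 ,
\]
and \Cref{eq:lin} is exactly $(1-\varepsilon)^4\, u_1 u_2 u_3 u_4 \le (u_1-t)(u_2-t)(u_3-t)(u_4-t) \le u_1 u_2 u_3 u_4$.

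For the lower polynomial, $t \ge 0$ gives $l_i \le l_i + t$ for each $i$, while the constraint $t \le \varepsilon\min(u_1,\dots,u_4,l_1,l_2,l_3) \le \varepsilon l_i$ gives $l_i + t \le (1+\varepsilon) l_i$; since each $l_i$ is positive, I multiply these three two-sided bounds (a legitimate step because all operands are nonnegative) to bracket $(l_1+t)(l_2+t)(l_3+t)$, and then multiply through by $t \ge 0$ to recover \Cref{eq:uin}. For the upper polynomial the same scheme applies to the factors $u_i - t$, but the step that has to be done first is to verify nonnegativity of each factor: $t \le \varepsilon u_i \le u_i$ because $\varepsilon \le 1$. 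Granted that, $t \ge 0$ gives $u_i - t \le u_i$ and $t \le \varepsilon u_i$ gives $u_i - t \ge (1-\varepsilon) u_i$; multiplying the four nonnegative two-sided bounds yields \Cref{eq:lin}.

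The only delicate point — and the closest thing to an obstacle — is the bookkeeping that licenses multiplying the per-factor inequalities: termwise multiplication of inequalities requires all operands to be nonnegative, which is precisely where the hypotheses $0 < l_i, u_i \le 1$, $0 < \varepsilon \le 1$, and the cap $t \le \varepsilon\min(u_1,\dots,u_4,l_1,l_2,l_3)$ are all used (the cap keeps every $u_i - t$ from turning negative and turns $l_i + t$ into at most $(1+\varepsilon) l_i$). No calculus, optimization, or appeal to \Cref{lem:diff} is needed; the argument is purely a monotonicity argument carried out factor by factor.
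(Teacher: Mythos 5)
Your proof is correct and follows essentially the same route as the paper's: both arguments are factor-by-factor monotonicity bounds, using $t\ge 0$ for one direction and the cap $t\le\varepsilon\min(u_1,\dots,u_4,l_1,l_2,l_3)$ to bound each factor $l_i+t$ by $(1+\varepsilon)l_i$ and each $u_i-t$ below by $(1-\varepsilon)u_i$. If anything, your handling of the trailing factor $t$ in $L_0(t)=\tilde L(t)\,t$ (leaving it untouched rather than also replacing it by $\varepsilon l_{\min}$) is slightly cleaner than the chain of inequalities written in the paper, and your explicit check that each $u_i-t$ stays nonnegative is a point the paper leaves implicit.
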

\begin{proof}
The first inequality of \cref{eq:uin} follows trivially from $t > 0$ and from $\tilde{L}$ being monotonically increasing in $t$. For the second inequality let $l_{\min} = \min{l_1,\ldots,l_3}$ and $t \leq \varepsilon l_{\min}$, $0 < \varepsilon \leq 1$. It follows that
\begin{align*}
L_0( t) & \leq (l_1+\varepsilon l_{\min})(l_2+ \varepsilon l_{\min})(l_3+ \varepsilon l_{\min})\varepsilon l_{\min}\\
& \leq (1+\varepsilon)^3 l_1 l_2 l_3 t\\
&= (1+\varepsilon)^3 \tilde{L}(0) t
\end{align*}
The first inequality of \cref{eq:lin} follows analogously; let $u_{\min} = \min{u_1,\ldots,u_4}$ and $t\leq \varepsilon u_{\min}$, $0 < \varepsilon \leq 1$. We then have
\begin{align*}
\tilde{U}(t) &\geq (u_1-\varepsilon u_{\min})(u_2-\varepsilon u_{\min})(u_3-\varepsilon u_{\min})(u_4-\varepsilon u_{\min})\\
 &\geq (1-\varepsilon)^4 U_0(0)
\end{align*}
The second inequality again follows from $t>0$ and $\tilde{U}$ being monotonically decreasing in $t$.
\end{proof}
It follows that a simple approximate formula for maximum entropy estimates on triples exist.
\begin{lemma}\label{lem:tint}
For a triple with distribution over entries $\mathcal{D} = l_1, \ldots, l_3, u_1, \ldots, u_4$ let
\begin{equation*}
\tilde{t} = \frac{u_1 u_2 u_3 u_4}{l_1 l_2 l_3}
\end{equation*}
The triple frequency under maximum entropy $t < \varepsilon \min{\mathcal{d}}$ for $0 < \varepsilon \leq 0.5$ can be bounded in terms of $\tilde{t}$
\begin{equation}\label{eq:tbound}
t \in \left[ (1-6.5\varepsilon) \tilde{t},  \tilde{t}  \right]
\end{equation}
\end{lemma}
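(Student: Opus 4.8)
The plan is to get \Cref{lem:tint} as a short corollary of \Cref{lem:bounds}. Let $t_0$ denote the sought triple frequency, i.e.\ the root of the maximum-entropy equation $L_0(t)=U_0(t)$; writing out~(\ref{eq:upol})--(\ref{eq:lpol}) at $\varepsilon_1=0$, this is
\[
(l_1+t_0)(l_2+t_0)(l_3+t_0)\,t_0 \;=\; (u_1-t_0)(u_2-t_0)(u_3-t_0)(u_4-t_0).
\]
The hypothesis $0<t_0<\varepsilon\min\{l_1,l_2,l_3,u_1,\dots,u_4\}$ with $0<\varepsilon\le\tfrac12$ is exactly what is needed to apply \Cref{lem:bounds} at $t=t_0$ with this $\varepsilon$; there $\tilde L(0)=l_1l_2l_3$, $U_0(0)=\tilde U(0)=u_1u_2u_3u_4$, and $\tilde t=U_0(0)/\tilde L(0)$.

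\textbf{The two chains.} For the upper bound of~(\ref{eq:tbound}), splice the left inequality of~(\ref{eq:uin}) and the right inequality of~(\ref{eq:lin}) through the maxent identity:
\[
l_1l_2l_3\,t_0 \;\le\; L_0(t_0) \;=\; U_0(t_0) \;=\;\tilde U(t_0)\;\le\; u_1u_2u_3u_4,
\]
so $t_0\le \tilde t$. For the lower bound, use the other two inequalities of~(\ref{eq:uin}) and~(\ref{eq:lin}) the same way:
\[
(1+\varepsilon)^3\,l_1l_2l_3\,t_0 \;\ge\; L_0(t_0) \;=\;\tilde U(t_0)\;\ge\; (1-\varepsilon)^4\,u_1u_2u_3u_4,
\]
which rearranges to $t_0\ge \frac{(1-\varepsilon)^4}{(1+\varepsilon)^3}\,\tilde t$.

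\textbf{Closing step and the only delicate point.} What is left is the scalar estimate $\frac{(1-\varepsilon)^4}{(1+\varepsilon)^3}\ge 1-6.5\,\varepsilon$ on $(0,\tfrac12]$, which turns the last display into $t_0\ge(1-6.5\varepsilon)\tilde t$ and finishes the proof. This is a one-variable inequality: Bernoulli gives $(1-\varepsilon)^4\ge 1-4\varepsilon$ and $(1+\varepsilon)^{-3}\ge 1-3\varepsilon$, hence $\frac{(1-\varepsilon)^4}{(1+\varepsilon)^3}\ge(1-4\varepsilon)(1-3\varepsilon)=1-7\varepsilon+12\varepsilon^2$, which already yields a bound of the form $(1-\mathcal{O}(\varepsilon))\tilde t$; a marginally more careful treatment of the factor $\frac{(1-\varepsilon)^4}{(1+\varepsilon)^3}$ pins down the precise constant, and in any case its exact value is inessential for \Cref{thm:maxent:main}. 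I do not expect a real obstacle here — beyond the substitution into \Cref{lem:bounds} everything is bookkeeping — so the only things to watch are keeping the direction of each of the four inequalities in~(\ref{eq:uin})--(\ref{eq:lin}) straight as one passes through $L_0(t_0)=U_0(t_0)$, and checking the closing numeric bound.
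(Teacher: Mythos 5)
Your proposal follows essentially the same route as the paper: apply \Cref{lem:bounds} at the maximum-entropy root $t_0$ of $L_0(t)=U_0(t)$, splice the four inequalities of \cref{eq:uin} and \cref{eq:lin} through that identity to get $\frac{(1-\varepsilon)^4}{(1+\varepsilon)^3}\,\tilde t \le t_0 \le \tilde t$, and finish with a scalar estimate on $\frac{(1-\varepsilon)^4}{(1+\varepsilon)^3}$. The two main chains are exactly the paper's argument and are correct.

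The one point you defer --- ``a marginally more careful treatment pins down the precise constant $6.5$'' --- is the one place where something actually goes wrong, and your own Bernoulli computation already shows it: since $\frac{(1-\varepsilon)^4}{(1+\varepsilon)^3}=1-7\varepsilon+\mathcal{O}(\varepsilon^2)$, the inequality $\frac{(1-\varepsilon)^4}{(1+\varepsilon)^3}\ge 1-6.5\varepsilon$ is \emph{false} for small $\varepsilon$ (e.g.\ at $\varepsilon=0.01$ the left side is about $0.9323$ while the right side is $0.935$), so no amount of care recovers the constant $6.5$ in the lemma as stated. The correct first-order constant is $7$: one checks $(1-\varepsilon)^4-(1-7\varepsilon)(1+\varepsilon)^3=24\varepsilon^2+16\varepsilon^3+8\varepsilon^4\ge 0$, so $t_0\ge(1-7\varepsilon)\tilde t$ holds on all of $(0,\tfrac12]$. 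The paper's own proof makes the same slip --- it minimizes the auxiliary cubic $-\varepsilon^3+5\varepsilon^2-3\varepsilon+7$ over the range where it should take the maximum (attained at $\varepsilon=0$, value $7$) --- so you are in good company, and as you note the discrepancy is immaterial for \Cref{thm:maxent:main}, which only claims a $(1\pm\mathcal{O}(\varepsilon))$ factor. But to prove the lemma exactly as stated you would have to either change $6.5$ to $7$ or weaken the claim to $(1-\mathcal{O}(\varepsilon))\tilde t$.
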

\begin{proof}
Recall that $t$ takes value under maximum entropy when $U_0(t)/L_0(t) = 1$.
By \Cref{lem:bounds} we have a lower bound on $U_0(t)/L_0(t)$
\begin{align*}
(1-\varepsilon)^4 U_0(0) / (1+\delta)^3 \tilde{L}(0) &\leq U(t)_0/L_0(t)\\
(1-\varepsilon)^4 u_1 u_2 u_3 / (1+\varepsilon)^3 l_1 l_2 l_3 &\leq U_0(t)/L_0(t) \\
\frac{(1-\varepsilon)^4}{(1+\varepsilon)^3} \tilde{t} & \leq U_0(t)/L_0(t)
\end{align*}
By expansion and using $0 < \varepsilon \leq 0.5$ we get the needed lower bound of \cref{eq:tbound}.
\begin{align*}
1 - c*\varepsilon \geq \frac{(1-\varepsilon)^4}{(1+\varepsilon)^3} \\
c \geq -\varepsilon^3 + 5\varepsilon^2 - 3\varepsilon + 7\\
c \geq 6.5
\end{align*}
Equivalently we have the needed upper bound
\begin{align*}
U_0(t)/L_0(t) &\leq U_0(0) / \tilde{L}(0)\\
U_0(t)/L_(t) &\leq u_1 u_2 u_3 / l_1 l_2 l_3\\
U_0(t)/L_(t) &\leq \tilde{t}
\end{align*}
\end{proof}
We will now assume relative error on the distribution entries $l_i, u_i$. The following lemma holds analogously to \Cref{lem:tint}.
\begin{lemma}\label{lem:appr}
Let distribution $d = (l_1, \ldots l_3, u_1, \ldots, u_4)$. Let an approximation of distribution $d$ be defined by $u'_i \in [(1-\varepsilon)u_i, (1+\varepsilon)u_i]$ and  $l'_i \in [(1-\varepsilon)l_i, (1+\varepsilon)l_i]$ for each $u_i, l_i \in d$.
\begin{align*}
\tilde{t}_{r} &= \frac{u'_1 u'_2 u'_3 u'_4}{l'_1 l'_2 l'_3}\\
\tilde{t} &= \frac{u_1 u_2 u_3 u_4}{l_1 l_2 l_3}
\end{align*}
It holds that
\begin{equation}\label{eq:tapprox}
\tilde{t}_{r} \in \left[ (1-6.5\varepsilon') \tilde{t},  \tilde{t}  \right]
\end{equation}
\end{lemma}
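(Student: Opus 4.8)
The plan is to prove this by directly mirroring the proof of \Cref{lem:tint}, now treating the relative perturbations $u_i', l_j'$ in place of the additive $t$-shifts that acted on the polynomial terms there. The starting observation is that $\tilde{t}_r$ and $\tilde{t}$ are built from exactly the same factor pattern, so their quotient collapses to a product of per-term relative errors: $\tilde{t}_r/\tilde{t} = \bigl(\prod_{i=1}^{4} u_i'/u_i\bigr)\big/\bigl(\prod_{j=1}^{3} l_j'/l_j\bigr)$. Since $0 < l_j, u_i \le 1$ and $0 < \varepsilon \le 0.5$, every factor $u_i'/u_i$ and $l_j'/l_j$ is strictly positive, so this quotient is well defined, and it suffices to bound it below by $1-6.5\varepsilon'$ and above by $1$.

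For the lower endpoint I would push each factor to its adverse extreme, exactly as the loose side of \Cref{lem:bounds} (the $(1+\varepsilon)^3$ and $(1-\varepsilon)^4$ factors) was used in \Cref{lem:tint}: take each numerator ratio down to its minimum $1-\varepsilon$ and each denominator ratio up to its maximum $1+\varepsilon$, giving $\tilde{t}_r/\tilde{t} \ge (1-\varepsilon)^4/(1+\varepsilon)^3$. I then reuse the algebraic estimate already established inside the proof of \Cref{lem:tint}, namely that $(1-\varepsilon)^4/(1+\varepsilon)^3 \ge 1 - 6.5\varepsilon$ for $0 < \varepsilon \le 0.5$ (obtained there by expanding and bounding the resulting cubic in $\varepsilon$). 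Setting $\varepsilon' = \varepsilon$, this yields the left endpoint $\tilde{t}_r \ge (1-6.5\varepsilon')\tilde{t}$ of \Cref{eq:tapprox}.

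The upper endpoint is the analogue of the bound $t \le \tilde{t}$ obtained in \Cref{lem:tint}, and it is where the one-sidedness must be pinned down. In \Cref{lem:tint} the inequality $t \le \tilde{t}$ held because the perturbation by $t$ only ever tightened the ratio $U_0(t)/L_0(t)$ in a single direction — concretely $U_0(t) \le U_0(0)$ and $L_0(t) \ge \tilde{L}(0)t$, so the unperturbed configuration was extremal and no $\varepsilon$-inflation appeared on that side. I would establish the corresponding statement here by taking the error in the operative (tight) direction, capping each numerator factor at $u_i$ and flooring each denominator factor at $l_j$, so that $\tilde{t}_r \le u_1 u_2 u_3 u_4/(l_1 l_2 l_3) = \tilde{t}$, matching \Cref{eq:tapprox} with upper endpoint exactly $\tilde{t}$.

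The main obstacle is precisely this upper endpoint: unlike the lower bound, which tolerates a symmetric $(1\pm\varepsilon)$ spread and absorbs it into the $6.5\varepsilon'$ slack, the claim that the ratio never exceeds $\tilde{t}$ requires identifying the relevant perturbation direction with the tight sides of the max-entropy polynomials $U_0, L_0$ — exactly the role the signed $+t$ and $-t$ shifts played in \Cref{lem:tint}. Once this identification is made, the inequality is immediate from monotonicity of each individual factor; the care needed is to verify that this direction, rather than a symmetric inflation, is the one governing the estimate, so that both endpoints of \Cref{eq:tapprox} follow from the single per-term analysis together with the one algebraic inequality $(1-\varepsilon)^4/(1+\varepsilon)^3 \ge 1-6.5\varepsilon$.
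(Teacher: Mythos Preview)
Your approach is exactly what the paper intends: it gives no explicit proof of \Cref{lem:appr} and only remarks that the lemma ``holds analogously to \Cref{lem:tint}'', and your lower-bound argument via $(1-\varepsilon)^4/(1+\varepsilon)^3 \ge 1-6.5\varepsilon$ is precisely that analogy.

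You are also right to single out the upper endpoint as the real obstacle, and in fact the issue lies in the paper's statement rather than in your reasoning. With the symmetric hypotheses $u_i' \in [(1-\varepsilon)u_i,(1+\varepsilon)u_i]$ and $l_j' \in [(1-\varepsilon)l_j,(1+\varepsilon)l_j]$ as written, one may take every $u_i'=(1+\varepsilon)u_i$ and every $l_j'=(1-\varepsilon)l_j$, giving $\tilde t_r = \tfrac{(1+\varepsilon)^4}{(1-\varepsilon)^3}\,\tilde t > \tilde t$; so the upper bound $\tilde t_r \le \tilde t$ is simply false without the additional one-sidedness you try to import from the $\pm t$ structure of \Cref{lem:tint}. (Note also that the paper's $\varepsilon'$ is never defined.) The intended and provable conclusion is the two-sided estimate
\[
\tilde t_r \in \left[\tfrac{(1-\varepsilon)^4}{(1+\varepsilon)^3}\,\tilde t,\ \tfrac{(1+\varepsilon)^4}{(1-\varepsilon)^3}\,\tilde t\right]\subseteq \bigl[(1-\mathcal{O}(\varepsilon))\tilde t,\ (1+\mathcal{O}(\varepsilon))\tilde t\bigr],
\]
obtained by the same per-factor extremisation you use for the lower side. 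That weaker (symmetric) form is all that is needed to combine with \Cref{lem:tint} and yield the $(1\pm\mathcal{O}(\varepsilon))$ conclusion of \Cref{thm:maxent:main}, so your plan goes through once you replace the unattainable upper endpoint $\tilde t$ by $(1+\mathcal{O}(\varepsilon))\tilde t$.
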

\begin{proof}(\Cref{thm:maxent:main})
The theorem follows directly from \Cref{lem:tint,lem:appr}.
\end{proof}

\section{Experimental Results}\label{sec:exp}
Our experiments were conducted on $5$ real datasets shown in \Cref{tab:data}. For each dataset we prune the singletons with support below a specified threshold. We perform this pruning in order to construct datasets where intuitively the independence model has a chance to do well as it relies solely on the singletons, but also to keep the number of items down to a practical level, as the time complexity with $n$ items is $\mathcal{O}(n^3)$ per experiment.
AOL Queries \footnote{\url{http://www.gregsadetsky.com/aol-data/}} is a uniform sample of the infamous AOL seach terms dataset. Docwords is transactions of words occurring together in documents. From MovieLens \footnote{\url{http://www.grouplens.org/node/73}} we create a dataset where a transaction is the set of movies which a particular user rated 5/5 stars. Retail \footnote{\url{http://fimi.ua.ac.be/data/}} is shopping baskets from an anonymous Belgian supermarket.

{\bf Overview of experiments.} We start by showing how the independence and maxent estimators perform on the full datasets, i.e., when there is significant support for the triple whose frequency is to be estimated as well as its subsets. The maxent estimator shows better concentration and less variance even for the low-support triples. We then perform experiments for all datasets where $1\%$ of the transactions are sampled and we wish to estimate the frequency of triples in the whole data set. For every triple $X$ with $30 \leq occ(X) \leq 100$ we use the three estimators $\theta^m_X$, $\theta^i_X$ and $\theta^e_X$. As $X$ occurs at most $100$ times in the full dataset, it occurs in expectation at most once in the sample.

We also study precision and recall for the problem of approximating the set of frequent triples based on the estimators. Again we consider two cases: 1) Estimates are based on the full dataset, where the set to approximate consists of the $10\%$ highest frequencies among all triples, and 2) as in the first case, but with estimates based on a $1\%$ sample. In the latter case, if the  threshold for being in top 10\% is $\Delta$, then we include in the estimate all triples that are estimated to have at least $0.9 \Delta$ occurrences. The number 0.9 was experimentally found to yield a good precision/recall tradeoff for the maximum entropy estimates.

Finally, using again a $1\%$ sample of the data, we compute the average ratio between the error made by our maximum entropy estimate and estimates made by independence and extrapolation, respectively.

In summary our experiments show:
\begin{enumerate}
\item In most cases, the maximum entropy estimator provides the best estimate for low-support triples.
\item Sampling with higher probability increases concentration greatly for $\theta^m_X$ even when the sample is still of insufficient size for $\theta^e_X$ to be useful.
\item In almost all cases studied, the precision and recall are strictly better for $\theta^e_X$ (see \cref{tab:prfull,tab:prsample}).
\item When predicting the occurences of triples in the full dataset using a $1\%$ sample, we show $\theta^m_X$  improves the absolute error by a factor $3$ -- $14$ compared to $\theta^i_X$ and $\theta^e_X$ (see \Cref{tab:estratio}).
\end{enumerate}
\begin{table}[!t]
\centering
\begin{tabular}{|c|ccc|}
\hline
Name & Occ. Threshold& \#Items & \#Transactions\\
\hline
AOL Queries & $500$ & $211$ & $144038$\\
Docwords & $3000$ & $142$ & $49078$\\
Movielens & $3000$ & $85$ & $67312$\\
Retail & $500$ & $85$ & $88162$\\
\hline
\end{tabular}
\caption{Datasets used. All datasets are from real data and have previously been used for data mining purposes.}\label{tab:data}
\end{table}

\begin{table}[t!]
\centering
\begin{tabular}{|l|llll|}
\hline
 & \multicolumn{2}{ c }{Independence} & \multicolumn{2}{ c| }{Maxent}\\
Dataset & Precision & Recall & Precision & Recall\\
\hline
AOL Queries & $0.0$ & $0.0$ & $0.54$ & $1.0$\\
Docwords & $0.93$ & $0.43$ & $0.92$ & $0.93$\\
MovieLens & $1.00$ & $0.003$ & $0.80$ & $0.96$\\
Retail & $1.00$ & $0.43$ & $0.99$ & $0.97$\\
\hline
\end{tabular}
\caption{Precision-recall for full datasets. Relevant triple threshold $\Delta$ is such that relevant triples are among the $10\%$ most frequent.}\label{tab:prfull}
\end{table}

\begin{table}
\centering
\begin{tabular}{|l|llllll|}
\hline
 & \multicolumn{2}{ c }{Ind.} & \multicolumn{2}{ c }{Maxent} & \multicolumn{2}{ c| }{Extrapol.} \\
Dataset & Prec & Rec & Prec & Rec & Prec & Rec\\
\hline
AOL Queries & $1.0$ & $0.001$ & $0.23$ & $0.89$ & $0.31$ & $0.67$ \\
Docwords & $0.88$ & $0.44$ & $0.56$ & $0.72$ & $0.53$ & $0.58$ \\
MovieLens & $1.0$ & $0.009$ & $0.51$ & $0.90$ & $0.49$ & $0.85$ \\
Retail & $0.93$ & $0.40$ & $0.51$ & $0.78$ & $0.49$ & $0.73$ \\
\hline
\end{tabular}
\caption{Precision-recall for sampled datasets.  Relevant triple threshold $\Delta$ is such that relevant triples are among the $10\%$ most frequent. Triples are reported when $occ \geq 0.9\Delta$ as this was observed to maximize precision/recall for $\theta_X^e$.}\label{tab:prsample}
\end{table}

\begin{table}
\centering
\begin{tabular}{|l|ll|}
\hline
Dataset & Independence & Extrapolation \\
\hline
AOL Queries & $7.91$ & $7.58$ \\
Docwords & $6.31$ & $14.32$ \\
MovieLens & $11.55$ & $7.22$ \\
Retail & $3.22$ & $4.42$ \\
\hline
\end{tabular}
\caption{For $n$ estimates and two estimators $est_1$ and $est_2$ the table shows the normalized absolute error ratio: $\frac{1}{n} \sum_X \left(|est_1(X) - occ(X)| / |est_2(X) - occ(X)|\right)$, where $est_2$ is maximum entropy and $est_1$ is independence and extrapolation respectively for the two columns.}
\label{tab:estratio}
\end{table}

\subsection{Maxent vs. independence for full datasets}\label{sec:exgen}
For all datasets we ran the estimators on every triple with occ $>30$ on the full (unsampled) dataset. The figures show the concentration of the estimates by for each triple plotting its observed value (Y-axis) and estimated value (X-axis). The plots are shown in \Cref{fig:full:1,fig:aol:all}.

For the Docwords dataset both the maxent estimate (\Cref{fig:doc:ent}) and the independence estimate (\Cref{fig:doc:ind}) are concentrated around the $X=Y$, which denote the line of optimal estimations while for MovieLens we observe similar concentration for maxent (\Cref{fig:mov:ent}) while the independence estimator underestimates slightly (\Cref{fig:mov:ind}) and Retail behaves equivalently in this setting. For AOL Queries (\Cref{fig:aol:all}) using independence estimates we observe similar great under estimation due to high positive correlation, while the maxent estimator overestimates slightly but is far more concentrated.

Our precision and recall computations (\Cref {tab:prfull}) is based on relevance threshold $\Delta$ being such that relevant triples are among the top $10\%$ most frequent and we report a triple if the estimate is at least $\Delta$. Note that the high precision for the independence estimate is due to high underestimation - the estimators report too few triples as relevant, as the recall shows. An extreme case of this is the AOL dataset that reports zero triples. Maxent has similar precision but much higher recall for all datasets.
\begin{figure*}
        \centering
        \begin{subfigure}[b]{0.5\textwidth}
                \centering
                \includegraphics[width=\textwidth]{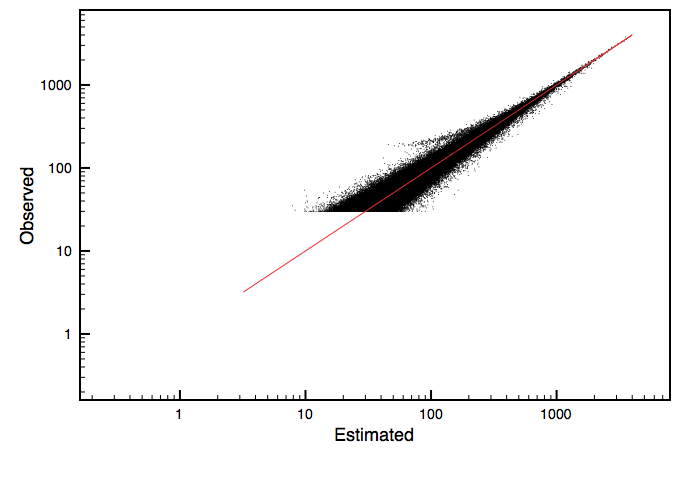}
                %%\vspace{-1cm}
                \caption{Max. entropy estimates for all triples of occ $>30$ in Docwords.}
                \label{fig:doc:ent}
        \end{subfigure}%
        ~ %add desired spacing between images, e. g. ~, \quad, \qquad etc.
          %(or a blank line to force the subfigure onto a new line)
        \begin{subfigure}[b]{0.5\textwidth}
                \centering
                \includegraphics[width=\textwidth]{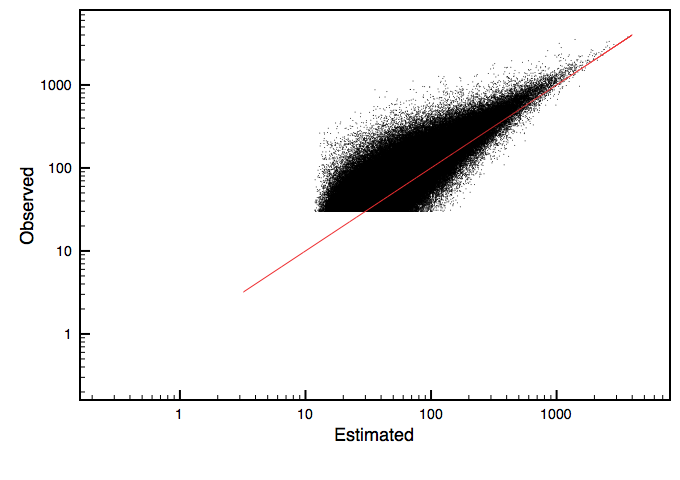}
                %%\vspace{-1cm}
                \caption{Independence estimates for all triples of occ $>30$ in Docwords.}
                \label{fig:doc:ind}
        \end{subfigure}

        \begin{subfigure}[b]{0.5\textwidth}
                \centering
                \includegraphics[width=\textwidth]{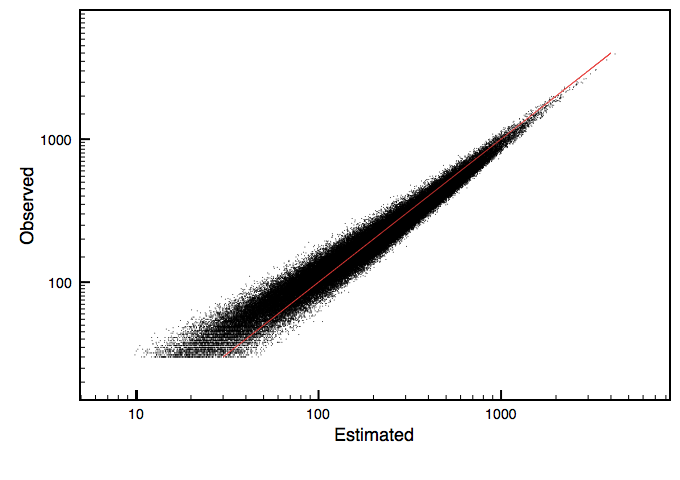}
                %%\vspace{-1cm}
                \caption{Max. entropy estimates for all triples of occ $>30$ in Movielens.}
                \label{fig:mov:ent}
        \end{subfigure}%
        ~ %add desired spacing between images, e. g. ~, \quad, \qquad etc.
          %(or a blank line to force the subfigure onto a new line)
        \begin{subfigure}[b]{0.5\textwidth}
                \centering
                \includegraphics[width=\textwidth]{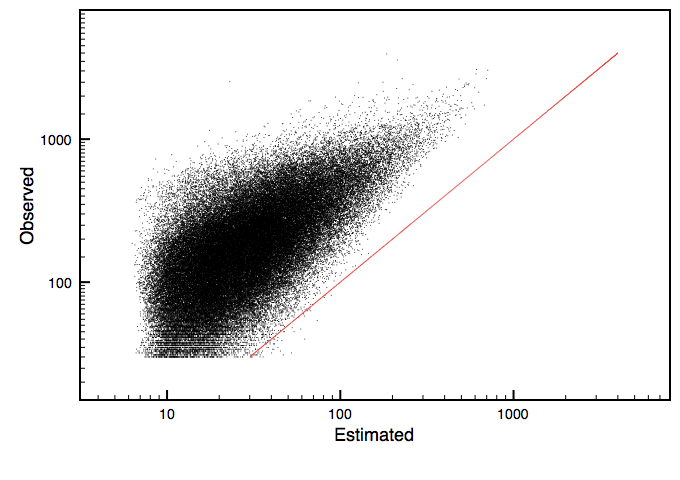}
                %\vspace{-1cm}
                \caption{Independence estimates for all triples of occ $>30$ in Movielens.}
                \label{fig:mov:ind}
        \end{subfigure}
        \caption{Concentration plots for Docwords and Movielens datasets. Each point is a triple, the Y-value of a point is the empirical number of occurrences (occ) of the triple while the X-value is the estimated number of occurences, using either independence or maxent estimators. The red line is X=Y. We observe better concentration on the maxent estimates, in particular when the statistical significance is high.} \label{fig:full:1}
\end{figure*}

%\begin{figure*}
%
%        \begin{subfigure}[b]{0.5\textwidth}
%                \centering
%                \includegraphics[width=\textwidth]{}
%                \caption{Maximum entropy estimates for all triples of occ $>30$ in Retail.}
%                \label{fig:gull}
%        \end{subfigure}%
%        ~ %add desired spacing between images, e. g. ~, \quad, \qquad etc.
%          %(or a blank line to force the subfigure onto a new line)
%        \begin{subfigure}[b]{0.5\textwidth}
%                \centering
%                \includegraphics[width=\textwidth]{}
%                \caption{Independence estimates for all triples of occ $>30$ in Retail.}
%                \label{fig:tiger}
%        \end{subfigure}
%        \caption{Concentration plots the Retail dataset. Each point is a triple, the Y-value of a point is the empirical number of occurrences of the triple while the X-value is the estimated occurences, using either independence or maxent estimators. The red line is X=Y. For Retail we observe maxent to be well-concentrated around $X=Y$ while independence generally under-estimates.} \label{fig:full:2}
%\end{figure*}

\begin{figure*}
        \centering
        \begin{subfigure}[b]{0.5\textwidth}
                \centering
                \includegraphics[width=\textwidth]{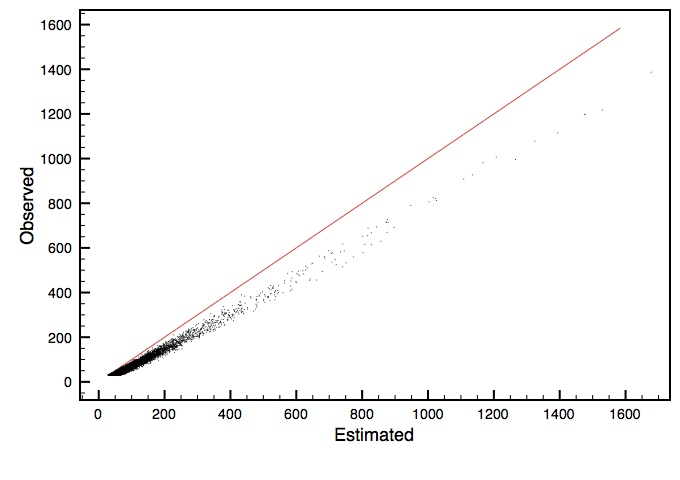}
                %\vspace{-1cm}
                \caption{Maxent estimates for all triples of occ $>30$ in AOL Queries.}
                \label{fig:aol:maxent}
        \end{subfigure}%
        ~ %add desired spacing between images, e. g. ~, \quad, \qquad etc.
          %(or a blank line to force the subfigure onto a new line)
        \begin{subfigure}[b]{0.5\textwidth}
                \centering
                \includegraphics[width=\textwidth]{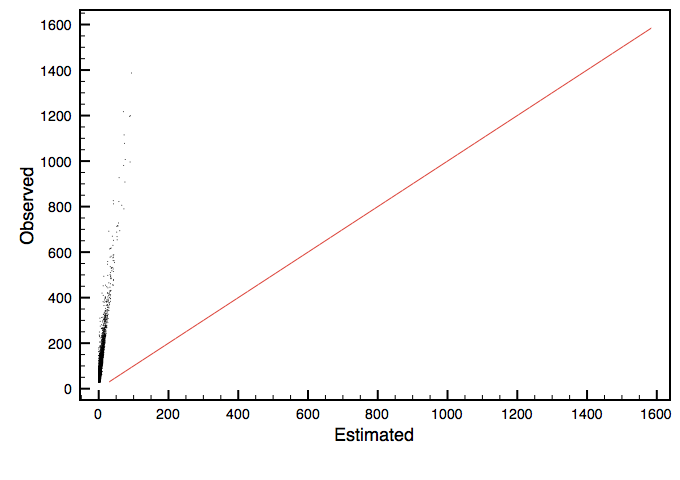}
                %\vspace{-1cm}
                \caption{Independence estimates for all triples of occ $>30$ in AOL Queries.}
                \label{fig:aol:ind}
        \end{subfigure}
        \caption{Concentration plots for AOL Queries. Each point is a triple, the Y-value of a point is the empirical number of occurrences (occ) of the triple while the X-value is the estimated number of occurences, using either independence or maxent estimators. The red line is X=Y. We observe better concentration for Maxent for all occurences. \label{fig:aol:all}}
\end{figure*}

\subsection{Low-support itemsets}\label{sec:exlow}
On the same datasets we now examine triples $X$ where $30 \leq occ(X) \leq 100$ and we sample independently at random every transaction with probability $1/100$. We wish to estimate the $\theta_X$ in the full dataset, but since we have $occ(X) \leq 100$ then following from independent sampling the expected number of occurrences of $X$ in our sample is $\leq 1$. We will perform estimates using the extrapolation estimate $\theta^e_X$, the independence estimate $\theta^i_X$ and the maxent estimate $\theta^m_X$. We restrict ourselves to triples where all pairs occur in the sample.

The extrapolation estimator $\theta^e_X$ performs similarly on all datasets. While $\theta^e_X$ is an unbiased estimator of $\theta_X$, the variance is large as a consequence of the sampling - by the mode of independent Bernoulli trials we have that the most likely outcome of a triple is $\lceil \mu \rceil = 1$, with the outcome $0$ and $2$ slightly less probable. On the AOL, Retail and MovieLens datasets our experiments conclude similarly: $\theta_X^e$ doesn't give meaningful estimates, e.g., zero for the unsampled triples and overestimates on the sampled triples, while $\theta_X^i$ underestimates greatly and $\theta_X^m$ is fairly concentrated. See \Cref{fig:aol:ratio} for example of all three estimators on AOL. For Docwords in this setting, we get better concentration for $\theta_X^i$ than $\theta_X^m$ - this can be explained by pair frequencies being more vulnerable to noise introduced by sampling than single frequencies and $\theta_X^i$ being well concentrated for Docwords (recall \Cref{fig:doc:ind}). However, we observe that if we raise the sampling probability to $1/20$ from $1/100$ then $\theta_X^m$ has better concentration. In summary, there is a sampling rate where $\theta_X^e$ is very poorly concentrated where $\theta_X^m$ outperforms $\theta_X^i$ on all datasets.

Precision and recall values (\Cref{tab:prsample}) were computed by setting the relevance threshold $\Delta$ to be s.t. if $occ(X) \geq \Delta$ then triple $X$ is among the $10\%$ most frequent triples in the entire dataset with $occ(X) \geq 30$. Triples are reported if the occurrence estimate, which is based on the $1/100$ independent sample of all transactions, is at least $0.9\Delta$ as this value was observed experimentally to yield a good tradeoff. AOL and MovieLens using $\theta_X^i$ has full precision due to reporting very few triples. Note that $\theta_X^e$ has an advantage in terms of precision/recall due to it mostly performing overestimates, i.e., if a triple occurs in the sample then it will likely be reported -- even so we see $\theta_X^m$ being better than $\theta_X^e$ except for AOL where only recall is better.

In the same setting, i.e., triples where $occ(X) \geq 30$ and using an independent $1\%$ sample we compute the normalized ratio of the absolute error between estimators $\theta_X^i$, $\theta_X^e$ and $\theta_X^m$. That is, letting $est_m(X),est_i(X)$ denote estimates of triple $X$ by maxent and independence respectively, the normalized ratio is $\frac{1}{n} \sum_X \frac{|est_i(X) - occ(X)|}{|est_m(X) - occ(X)|}$. This experiments show (see \Cref{tab:estratio}) that for all datasets in this setting we would improve, on average, our absolute error by a factor of $3$ -- $14$ by switching from independence or extrapolation to maximum entropy.

 \begin{figure}
         \centering
          \includegraphics[width=0.5\textwidth]{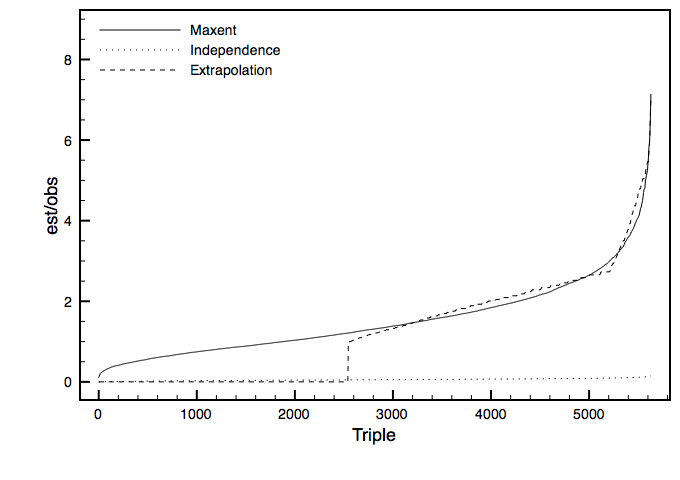}
         \caption{est/obs distribution plots for AOL Queries sampled at $1/100$. We observe that independence underestimates greatly, while maxent has better concentration than extrapolation, in particular for the low occurence triples.  \label{fig:aol:ratio}}
 \end{figure}

\bibliographystyle{plain}
\bibliography{maxent}
\end{document}